\documentclass[letterpaper]{article} 
\usepackage{aaai2027}  
\usepackage[hyphens]{url}  
\usepackage{graphicx} 
\usepackage{natbib}  
\usepackage{caption} 
\usepackage{algorithm}
\usepackage{tikz}
\usetikzlibrary{positioning, arrows.meta,calc}
\usepackage{times}
\usepackage{soul}
\usepackage{url}
\usepackage[utf8]{inputenc}
\usepackage{graphicx}
\usepackage{amsmath}
\usepackage{amssymb}
\usepackage{amsfonts}
\usepackage{mathtools}
\usepackage{stmaryrd}
\usepackage{amsthm}
\makeatletter
\AddToHook{env/algorithmic/before}{%
  \def\@currentcounter{ALG@line}%
}
\makeatother
\usepackage{cleveref}
\crefalias{ALG@line}{line}
\usepackage{thm-restate}
\theoremstyle{plain}

\newtheorem*{thm*}{Theorem}
\newtheorem{lem}{Lemma}

\newtheorem{obsv}{Observation}
\theoremstyle{definition}

\newtheorem{red}{Reduction rule}
\Crefname{red}{Reduction rule}{Reduction rules}
\crefname{appendix}{appendix}{appendices}
\Crefname{appendix}{Appendix}{Appendices}
\usepackage{booktabs}
\usepackage{makecell}
\usepackage[switch]{lineno}
\usepackage{xcolor}
\usepackage{algpseudocode}
\usepackage{algpseudocodex}
\usepackage{complexity}
\usepackage{adjustbox}
\usepackage{siunitx}
\usepackage{paralist}
\RequirePackage{xspace}
\newcommand{\n}[1]{{#1}\xspace}
\newcommand{\m}[1]{\ensuremath{#1}\xspace} 
\newcommand{\mapf}{\n{\textsc{MAPF}}}
\newcommand{\algname}{\n{IU-PIBT}}
\newcommand{\searchname}{\n{IU-LaCAM}}
\newcommand{\procname}{\textsf{\algname}}

\newcommand{\lacam}{\n{LaCAM}}
\newcommand{\ilp}{\n{\textsc{ILP}}}
\newcommand{\iumapf}{\n{\textsc{IUMAPF}}}
\newcommand{\riumapf}{\n{\textsc{$r$IUMAPF}}}
\newcommand{\giumapf}{\n{\textsc{GMAPF}}}
\newcommand{\suf}[1]{^{\text{#1}}}

\newcommand{\conf}[1]{\mathcal{Q}_{#1}}
\newcommand{\qto}[1]{\mathcal{Q}^{\text{to}}[#1]}
\newcommand{\qfrom}[1]{\mathcal{Q}^{\text{from}}[#1]}
\newcommand{\goal}[1]{g(#1)}

\DeclareMathOperator*{\dist}{\textsf{dist}}
\DeclareMathOperator*{\nextv}{\textsf{next}}
\newcommand{\nextvr}[1]{\textsf{next}^{#1}}
\DeclareMathOperator*{\argmin}{arg\min}
\newcommand{\valid}{\texttt{VALID}}
\newcommand{\invalid}{\texttt{INVALID}}
\newcommand{\true}{\texttt{TRUE}}
\newcommand{\false}{\texttt{false}}
\newcommand{\nei}[1]{N(#1)}

\newcommand{\neicl}[1]{N[#1]}
\newcommand{\neiclr}[2]{N_{#1}[#2]}
\newcommand{\degv}[1]{\textsf{deg}(#1)}
\newcommand{\indG}[1]{G[#1]}
\newcommand{\project}{S^{\text{pro}}}
\newcommand{\absorb}{S^{\text{abs}}}

\newcommand{\funcname}[1]{\m{\mathsf{#1}}}

\newcommand{\rev}[1]{#1}

\algrenewcommand\algorithmicindent{1em}
\newcommand{\IfSingle}[2]{\State \textbf{if}~#1~\textbf{then}~#2}
\newcommand{\Input}[1]{\item[\textbf{input}:~#1]}

\newcommand{\Params}[1]{\item[\textbf{params}:~#1]}

\newcommand{\Continue}{\n{\textbf{continue}}}

\newcommand{\gl}[1]{\textcolor[gray]{0.6}{#1}}

\algdef{SE}[WHILE]{WhileGl}{EndWhileGl}[1]{%
  \algpx@startIndent\algpx@startCodeCommand\gl{\algorithmicwhile\ #1\ \algorithmicdo}%
}{%
  \algpx@endIndent\algpx@startEndBlockCommand\algorithmicend\ \algorithmicwhile%
}
\pretocmd{\WhileGl}{\algpx@endCodeCommand}{}{}
\pretocmd{\EndWhileGl}{\algpx@endCodeCommand}{}{}
\ifbool{algpx@noEnd}{
  \algtext*{EndWhileGl}
  \apptocmd{\EndWhileGl}{\algpx@endIndent}{}{}
  \pretocmd{\EndWhileGl}{\algpx@endCodeCommand[1]}{}{}
}{}

\newcommand{\pop}{\funcname{pop}}
\newcommand{\push}{\funcname{push}}

\newcommand{\init}{\suf{init}}
\newcommand{\new}{\suf{new}}

\newcommand{\Q}{\m{\mathcal{Q}}}
\newcommand{\N}{\m{\mathcal{N}}}

\newcommand{\open}{\m{\mathit{Open}}}
\newcommand{\explored}{\m{\mathit{Explored}}}
\newcommand{\tree}{\m{\mathit{constraints}}}
\newcommand{\config}{\m{\mathit{config}}}

\usepackage{newfloat}
\usepackage{listings}
\DeclareCaptionStyle{ruled}{labelfont=normalfont,labelsep=colon,strut=off} 
\floatstyle{ruled}
\newfloat{listing}{tb}{lst}{}
\floatname{listing}{Listing}

\usepackage{booktabs}

\title{Distance-Constrained Unlabeled Multi-Agent Pathfinding}

\author{
    Takahiro Suzuki\equalcontrib\corresponding\textsuperscript{\rm 1,2},
    Yuma Tamura\equalcontrib\corresponding\textsuperscript{\rm 1},
    Keisuke Okumura\equalcontrib\corresponding\textsuperscript{\rm 2}
}
\affiliations{
    \textsuperscript{\rm 1}Graduate School of Information Sciences, Tohoku University, Japan \\
    \textsuperscript{\rm 2}National Institute of Advanced Industrial Science and Technology (AIST), Japan\\
}

\nocopyright
\begin{document}

\maketitle

\begin{abstract}
    We study a graph pathfinding problem \textsc{Distance-$r$ Independent Unlabeled Multi-Agent Pathfinding}, finding a set of collision-free paths between two sets where agents must stay at pairwise distance at least $r+1$ at all times.
    This additional constraint, which generalizes collision for classical \mapf, 
    makes feasibility (i.e., whether a solution exists) \PSPACE-complete, in contrast to standard (\emph{unlabeled}) \mapf, where it can be decided in polynomial time.
    We address the challenge via two complementary approaches: \emph{(i)}~a reduction-based optimal algorithm with a feasibility-preserving compression, and \emph{(ii)}~a configuration generator-based search.
    Despite the hardness, empirical results show that our algorithm can handle hundreds of agents in a practical timeframe.
\end{abstract}

\section{Introduction}
Research on \emph{Multi-Agent Pathfinding (\mapf)} has attracted considerable attention in the last decade, driven by the growing demand for modern multi-robot systems.
\mapf aims to find a set of paths, or a \emph{plan} from initial points to the targets for given agents, without collisions at every time.
In response to diverse requirements and constraints arising in practical settings, many variants of \mapf have been proposed~\cite{mapf:stern19}; one of them is the \emph{unlabeled (anonymous)} version of \mapf, in which the agents are identical and a \emph{collision} is defined as multiple agents simultaneously occupying the same vertex.
This problem naturally arises in the scenario of homogeneous robots such as warehouse transport~\cite{mapf:MaK16}, where we need to solve an integrated joint problem that optimizes goal assignment and path planning.

Although \mapf has been widely studied, it often simplifies a collision by ignoring geometric interference, which hinders its direct application.
For example, we need to address the agents that occupy more than one grid cell for real-world deployment~\cite{lamapf:Lehoux24}, safety margins (e.g., drone downwash~\cite{drone:honig17}), and delay tolerance for robust execution~\cite{mapf:Atzmon20}.
These issues motivate a formulation that goes beyond the standard one.

These collision managements can be naturally abstracted to a single distance constraint: any pair of agents comes within a distance $r$ on a graph.
Hence, a feasible placement of agents corresponds to a \emph{distance-$r$ independent set}, which has been actively studied in the context of graph theory and graph algorithms.
We introduce an extension of \emph{unlabeled} \mapf; given two vertex subsets, \textsc{Distance-$r$ Independent Unlabeled Multi-Agent Pathfinding} (\riumapf in short) aims to find a plan between them while keeping the placements distance-$r$ independent at every time.

Despite the case $r{=}0$ has been studied as \emph{Unlabeled} \mapf~\cite{mapf:Yu12} and a polynomial-time optimal algorithm is known for $r{=}0$, this generalization remains largely under-explored.
No algorithm applies to general $r\in\mathbb{N}$, and it is \PSPACE-complete to determine feasibility when $r{=}1$~\cite{reconf:Kristan25}.
This implies not only that the problem is unlikely to be solved in polynomial time, but also that some instances have plans with super-polynomial makespan, i.e., feasibility can depend on extremely long-horizon coordination.

While independent set-based collision modeling provides a natural extension of \mapf towards real-world applications, the resulting increase in computational complexity poses a significant challenge for solving \riumapf.
To this end, following the common practice in \mapf, we propose two complementary approaches to tackle \riumapf, which constitute our contributions.
\emph{(i)}~First, to preserve solution quality guarantee, we present a reduction to \textsc{Integer Linear Programming} (\ilp) that yields a makespan-optimal plan.
This is followed by a feasibility-preserving instance compression that accelerates solution discovery.
Still, this method is effective only when the number of agents is small; there is a scalability barrier towards large instances.
\emph{(ii)}~Therefore, we next present a configuration generator \algname and search algorithm \searchname, which is an analogue to PIBT~\cite{mapf:PIBT} and \lacam~\cite{MAPF:lacam23} for classical labeled \mapf.
\algname rapidly computes the next configuration, but its rule-based nature tends to get stuck.
\searchname resolves this issue by adding a search following \lacam and a livelock detection technique to \algname, thereby quickly solving large-scale instances.

Our reduction formulation provides an approach when the plan quality is required, whereas the configuration generator-based approach offers a method for handling a large swarm of agents.
Together, we establish complementary schemes for \riumapf and enable systematic study of it beyond classical \emph{Unlabeled} \mapf.
In what follows, we present a formulation, related work, algorithms, and evaluations for both approaches in order.
All the proofs are available in the appendix.

\section{Preliminaries}\label{sec: preliminaries}
\paragraph{Notation.}
Let $G = (V, E)$ be a simple, undirected, and finite connected graph.
For a pair of vertices $u,v\in V$, we define $\dist(u,v)$ as the length (i.e., number of edges) of the shortest path between $u$ and $v$.
For a vertex $v \in V$, the \emph{closed neighborhood} is $\neicl{v} = \{w \in V \mid vw \in E\}\cup \{v\}$, and its \emph{degree} is $\degv{v} = |\neicl{v}|-1$. 
We extend the notation; let $\neiclr{r}{v}$ be a set of vertices $u$ such that $\dist(u,v)\le r$ for an integer $r\ge 0$.
We define $\Delta \coloneqq \max_{v \in V} \degv{v}$ and $\Delta_r\coloneqq \max_{v\in V}|\neiclr{r}{v}|$. 
For a vertex set $V' \subseteq V$, the neighborhood is $\neicl{V'} = \bigcup_{v \in V'} \neicl{v}$.
We define $\nextvr{r}(s,t)=\argmin_{v\in \neicl{\nextvr{r-1}(s,t)}}\dist(v,t)$, and $\nextvr{0}(s,t)=s$.
We assume that $\argmin$ deterministically returns exactly one vertex.
We use a notation for a set of continuous integers: let $[i,j]$ be a set $\{i,i+1,\dots, j-1,j\}$ for $i\le j$.

\paragraph{Problem Definition.}
Let $G$ be a graph and $A = [1,n]$ be a set of $n$ agents. 
A \emph{configuration} $\mathcal{Q} = (q_1, \dots, q_n) \in V^n$ is an assignment of each agent to a vertex. 
Although the configuration $\mathcal{Q}$ is defined as a list of vertices, we may also refer to it as a (multi-)set of vertices for convenience.
A configuration $\mathcal{Q}$ is \emph{distance-$r$ independent} if there is no pair of agents $i,j\in A$ such that $\dist(\mathcal{Q}[i],\mathcal{Q}[j])\le r$.

Given a set of initial and target vertices $S,T \subset V$ with $|S|=|T|=n$ and a \emph{radius} $r\in \mathbb{N}$, \riumapf aims to find a finite sequence of configurations $\Pi = [\conf{0}, \conf{1}, \ldots, \conf{t^{\ast}}]$, called a \emph{plan}, where sets $\conf{0}$ and $\conf{t^\ast}$ equals to $S$ and $T$, respectively.
Throughout the plan, each agent can move to one of the closed neighbors of the current vertex, provided that every configuration remains distance-$r$ independent.
Formally, a plan $\Pi$ from $S$ to $T$ in \riumapf holds:
\begin{enumerate}
   \item $\forall k \in [1,t^{\ast}], i \in A: \mathcal{Q}_{k}[i] \in \neicl{\mathcal{Q}_{k-1}[i]}$ (\emph{reachability}). 
   \item $\forall k \in [0,t^{\ast}]: \conf{k}$ is distance-$r$ independent.
\end{enumerate}
Note that vertex- and swap-conflicts commonly used in \mapf~\cite{mapf:stern19} can be omitted due to the distance-$r$ independence of configurations and the anonymous setting.

\section{Related work}\label{sec:relatedwork}
\paragraph{\mapf Extensions.} 
Numerous \mapf variants move beyond the classical abstraction to capture practical constraints such as extended-body robots and uncertainty.
Our work offers another approach to accounting for such extensions.

One line of research extends \mapf beyond point agents.
Large-agent \mapf has been studied under various collision models, including continuous Euclidean distance constraints~\cite{lamapf:agafonov25}, explicit geometric shapes~\cite{lamapf:Li19}, and some algorithms are known for these variants~\cite{drone:honig17,lamapf:Li19}.
While these studies are closely related to our setting in that they generalize collision handling, our formulation represents feasible configurations as distance-$r$ independent sets.
This perspective enables the use of graph-theoretic and combinatorial techniques, which we exploit to derive the tractable algorithm presented in \Cref{sec:FPT}.

In addition, we focus on the unlabeled setting, where generalized collision handling has been largely underexplored.
Since unlabeled \mapf often admits more tractable formulations than its labeled counterpart, as reflected in the complexity of makespan optimization~\cite{mapf:Yu12,mapf:Yu13hard} and in the design of efficient suboptimal algorithms~\cite{mapf:tswap}, developing a planner tailored to the unlabeled setting enables a faster and more scalable approach than adapting labeled algorithms with target assignment.

Another line of work relaxes perfect synchrony, considering robust execution under delays~\cite{mapf:Atzmon20} and time uncertainty~\cite{safemapf:Shahar21,mapf:otimapp22}; a plan of \riumapf is one for \emph{unlabeled} \mapf that allows delays of at most $r$ steps.

\paragraph{Reduction-based Approach} translates \mapf into well-known problem and leverage off-the-shelf solvers, e.g., \textsc{SAT}-based approach for \mapf~\cite{MAPF:surynek16}, and flow-based optimal one for \emph{unlabeled} \mapf~\cite{mapf:Yu12}. 
One of the advantages is that they can readily provide guarantees, often used when the plan quality is required. 

\paragraph{Configuration Generator-based Approach} is one of the successful schemes for large-scale, discrete coordination problems~\cite{mapf:PIBT}, as it rapidly generates the successor configuration from the current one and iterates through execution. 
These algorithms find plans in large instances by localizing the search.
Moreover, \citeauthor{MAPF:lacam23} (\citeyear{MAPF:lacam23}) develops a search algorithm, \lacam, that leverages a configuration generator, addressing long-horizon tasks that are difficult to handle solely with a configuration generator.
Prior works~\cite{mapf:Fu25,cumapf:suzuki25} show the advantages of this approach for \mapf variants beyond classical formulation (including the anonymity of agents), and these approaches partially inspire our work.

\section{Optimal algorithm with \ilp}\label{sec:ILP}
As described in \Cref{sec:relatedwork}, several works on \mapf take a reduction-based approach
to guarantee the solution quality.
In this study, we also adopt a reduction-based approach for {\riumapf} to perform planning with guaranteed solution quality: we choose \ilp as the target of the reduction to handle the distance-$r$ independence constraint flexibly.

First, we define the \emph{Bounded} \riumapf: given a graph $G=(V,E)$, vertex subsets $S, T \subseteq V$, and \emph{time bound} $\tau\in \mathbb{N}$, it determines whether there is a plan with length at most $\tau$.
Let $x_{v,t}$ be a variable that indicates whether an agent is at $v\in V$ at step $t$, that is, $x_{v,t}=1$ when $v\in \conf{t}$, and $x_{v,t}=0$ otherwise.
Also, $f_{u,v,t}$ represents whether an agent at vertex $u$ at step $t$ moves to $v\in \neicl{u}$ at step $t+1$ along $uv\in E$.
Then, the problem can be formulated as follows.
\begin{enumerate}
    \item $\forall (v,t) \in V\times [0,\tau]$, $x_{v,t} \in [0,1]$\label{ilpenum:defx}.
    \item $\forall (u,v,t) \in V^2\times [0,\tau]$, $f_{u,v,t} \in [0,1]$\label{ilpenum:deff}.
    \item The initial (or final) configuration $\conf{0}$ (resp. $\conf{\tau}$) coincides with $S$ ($T$): for $v\in V$, $x_{v,0}=1$ if $v\in S$ ($x_{v,\tau}=1$ if $v\in T$), and $x_{v,0} = 0$ ($x_{v,\tau}=0$) otherwise.
    \item For every step $t\in [0,\tau-1]$, exactly $x_{u,t}$ agents on $u$ moves to some vertex $v\in \neicl{u}$: $\forall u\in V$, $\sum_{v\in N[u]} f_{u,v,t} = x_{u,t}$.
    \item For every step $t\in [1,\tau]$, exactly $x_{u,t}$ agents on $u$ comes from some vertex $v\in \neicl{u}$: $\forall u\in V$, $\sum_{v\in N[u]} f_{v,u,t-1} = x_{u,t}$.
    \item For every step $t\in [0,\tau]$, $\conf{t}$ is distance-$r$ independent: $\forall u,v \in V$ s.t. $\dist(u,v)\le r,~x_{u,t}+x_{v,t}\le 1$\label{ilpenum:galactic}.
\end{enumerate}
Condition \ref{ilpenum:galactic} guarantees distance-$r$ independence of a configuration: for every pair $u,v\in V$ with distance at most $r$, it holds $x_{u,t}=0$ or $x_{v,t}=0$.
Then we set an objective as a constant, to check whether there is a satisfying assignment to these constraints.
It leads to an optimal algorithm, since the time bound for YES-instances is upper bounded by $\binom{|V(G)|}{n}$: the number of vertex subsets with size $n$.
\section{Compression Algorithm}\label{sec:FPT}
The above \ilp adds $O(\Delta_r |V|)$ constraints per step, even when the number of agents $n$ is extremely small.
However, one might think that if $n$ is small, a given graph contains redundant vertices that can be contracted while preserving feasibility. 
Such a reduction results in a more compact \ilp instance.
In this section, we theoretically demonstrate that such a reduction is possible if $\Delta$ and $n$ are sufficiently small, although at the expense of optimality. 
Specifically, we develop an algorithm called a \emph{kernelization} that bounds the size of the reduced graph by a function of $\Delta$ and $n$, via reduction to the \emph{galactic reconfiguration}~\cite{gal:Bartier23}.
It was recently established to design kernelizations for \textsc{Token Sliding}, which asks whether one can reach a target token placement from an initial one by repeatedly sliding a token to an adjacent vertex while maintaining the placements as an independent set~\cite{reconf:hearn05}.
To adapt the framework to \riumapf, we first define 
notations for the algorithms and the galactic variant.
We then present a kernelization for \textsc{Galactic} $1\iumapf$ (\giumapf in short). 
Finally, we generalize this method to the distance-$r$ variant.

\paragraph{Kernelization and {\FPT} algorithm.}
A \emph{parameterized problem} of a decision problem $L$ is a set of instances $(x,\kappa)$, where $x$ is an instance of $L$ and $\kappa\in \mathbb{N}$ is called the \emph{parameter} of the problem.
An algorithm for $\mathcal{P}$ is \emph{fixed-parameter tractable} (\FPT) if the algorithm solves $\mathcal{P}$ in $h(k)\cdot \mathrm{poly}(|x|)$ time for every instance $(x,\kappa)$, where $h$ is some computable function and $|x|$ is the size of $x$.
Moreover, $\mathcal{P}$ admits a \emph{kernelization} if there is a polynomial-time algorithm that takes an instance $(x,\kappa)$ and outputs another instance $(x',\kappa')$ (called a \emph{kernel}) such that
\begin{inparaenum}
    \item[\emph{(i)}] $x$ is a YES-instance of $\mathcal{P}$ if and only if $x'$ is a YES-instance of $\mathcal{P}$, 
    \item[\emph{(ii)}] $|x'|$ is bounded from above by $f(\kappa)$, where $f$ is a computable function, and 
    \item[\emph{(iii)}] $\kappa \ge \kappa'$.
\end{inparaenum}
Kernelization immediately yields an {\FPT} algorithm via exhaustive search.
See \cite{para:Cygan15} for more details.

\paragraph{Notation for \giumapf.}
For a vertex subset $S$ of a graph $G$, let $\indG{S}$ denote the subgraph induced by $S$. 
A graph $G$ is \emph{connected} if there is a path between any two vertices of $G$. 
A \emph{component} of $G$ is a maximal connected subgraph.
A \emph{galactic graph} $G=(P\cup B, E)$ is a graph whose vertex set is the disjoint union of two sets $P$ (called \emph{planets}) and $B$ (called \emph{black holes}).
Recall that, in \riumapf, agents are required to keep each configuration independent; thus, no vertex can be occupied by more than one agent at any time.
We relax this constraint so that only agents on the vertices in $P$ are required to be independent, while each vertex in $B$ may hold at most $n$ agents.
That is, a black hole $b\in B$ behaves as a vertex that absorbs agents located at adjacent vertices.
We say that a configuration $\mathcal{Q}$ is \emph{galactic distance-$r$ independent} if there is no pair $i,j\in A$ such that $\mathcal{Q}[i],\mathcal{Q}[j]\in P$ and $\dist_P(\mathcal{Q}[i],\mathcal{Q}[j])\le r$, where $\dist_{P}(u,v)$ for a pair of vertices $u,v\in P$ is the length of the shortest path in $G[P]$ between $u$ and $v$.
In particular, a galactic distance-$1$ independent configuration is simply called a \emph{galactic independent configuration}.
Given a galactic graph $G$ and two sets of vertices $S, T \subseteq P$ such that $b\notin \neicl{S\cup T}$ for all black hole $b$ and $|S|=|T|=n$, Galactic \riumapf asks whether there is a plan $\Pi = [S = \conf{0}, \conf{1}, \ldots, \conf{t^{\ast}}=T]$ satisfying the following two conditions:
\begin{enumerate}
   \item $\forall k \in [1,t^{\ast}],\forall i \in A: \mathcal{Q}_{k}[i] \in \neicl{\mathcal{Q}_{k-1}[i]}$. 
   \item $\forall k \in [0,t^{\ast}]: \conf{k}$ is galactic distance-$r$ independent.
\end{enumerate}
Note that \giumapf generalizes $1\iumapf$; an instance $(G=(V,E), S, T)$ of $1\iumapf$ can be viewed as an instance $(G'=(P\cup B, E),S,T)$ of \giumapf with $(P,B)=(V, \emptyset)$.
\paragraph{\ilp formulation of \giumapf.}
Observe that {\giumapf} can be reduced to {\ilp} by slightly modifying {$1\iumapf$} formulation as follows:
\begin{inparaenum}
    \item[\emph{Cond.~\ref{ilpenum:defx}:}] $x_{v,t}\in [0,1]$ if $v\in P$, and $x_{v,t}\in [0,n]$ otherwise;
    \item[\emph{Cond.~\ref{ilpenum:deff}:}] $f_{u,v,t}\in [0,n]$ if $u,v\in B$, and otherwise $f_{u,v,t}\in [0,1]$, and;
    \item[\emph{Cond.~\ref{ilpenum:galactic} galactic independence:}] $\forall u,v\in P$ s.t. $\dist_P(u,v)\le 1,~ x_{u,t}+ x_{v,t}\le 1$.
\end{inparaenum}
By the same discussion as \Cref{sec:ILP}, we obtain an optimal algorithm for {\giumapf}; similar for \textsc{Distance-$r$} variant.

\subsection{Kernelization for \giumapf}
In this section, we describe a rule that replaces redundant parts of a given instance $\mathcal{I}=(G=(P\cup B,E), S,T)$ with black holes, and then reduce the size of the resulting instance.
This reduction is useful for the \ilp algorithm, since it decreases the size of the instance and hence reduces the number of variables and constraints.
Given an instance $\mathcal{I}$, we first perform a preprocessing as follows.
For each vertex $v \in P$, we partition the vertices into layers: $L_{i}\coloneq \{v \in P\mid d(v)=i\}$, where
$d(v)\coloneq \min_{s\in S\cup T}\dist_P(s,v)$.
Note that every vertex is contained in exactly one layer.
Under this preprocessing, our kernelization algorithm repeatedly performs the following two reductions: one is a straightforward rule, and the other generates black holes.
For the sake of simplicity, we write $V_p$ for the set of vertices in the planet that are not adjacent to $S$ or $T$, i.e., $P\setminus (\neicl{S\cup T})$, and we write $G_p = G[V_p]$.

\begin{figure}
    \centering
\begin{tikzpicture}[
  scale = 0.9,
  corner/.style={draw, rounded corners=2pt, minimum width=8mm, minimum height=8mm, inner sep=0pt},
  mid/.style={draw, circle, fill=white, minimum size=1.6mm, inner sep=0pt},
  vertex/.style={draw, circle, fill=white, minimum size=1.2mm, inner sep=0pt},
  line/.style={draw, line width=0.6pt}
]

{\scriptsize
\node[corner] (TL) at (0,  1.5) {};
\node (C1) at ($(TL)+(0,-0.25)$) {$C_1$};
\node[corner] (TR) at (1.8,  1.5) {};
\node (C2) at ($(TR)+(0,-0.25)$) {$C_2$};
\node[corner] (B) at (0.9,  0.2) {};
\node (C3) at ($(B)+(0,-0.25)$) {$C_3$};
\coordinate (c) at (0.9, 1);

\draw[line] (TL.east) -- (TR.west);
\draw[line] (TR.south) -- (B.east);
\draw[line] (B.west) -- (TL.south);
\draw[line] (c) -- (TL.south east);
\draw[line] (c) -- (TR.south west);
\draw[line] (c) -- (B.north);

\coordinate (L1) at ($(TL.south)!0.25!(B.west)$);
\coordinate (L2) at ($(TL.south)!0.50!(B.west)$);
\coordinate (L3) at ($(TL.south)!0.75!(B.west)$);

\coordinate (T1) at ($(TL.east)!0.25!(TR.west)$);
\coordinate (T2) at ($(TL.east)!0.50!(TR.west)$);
\coordinate (T3) at ($(TL.east)!0.75!(TR.west)$);

\coordinate (R1) at ($(TR.south)!0.25!(B.east)$);
\coordinate (R2) at ($(TR.south)!0.50!(B.east)$);
\coordinate (R3) at ($(TR.south)!0.75!(B.east)$);

\node[mid] at (L1) {}; \node[mid, fill =red] at (L2) {}; \node[mid] at (L3) {};
\node[mid] at (T1) {}; \node[mid, fill = red] at (T2) {}; \node[mid] at (T3) {};
\node[mid] at (R1) {}; \node[mid, fill = blue] at (R2) {}; \node[mid] at (R3) {};

\node[mid,fill = blue] at (c) {};
\node[mid] at ($0.6*(TL.south east) + 0.4*(c)$) {};
\node[mid] at ($0.6*(TR.south west) + 0.4*(c)$) {};
\node[mid] at ($0.6*(B.north) + 0.4*(c)$) {};

\coordinate (v) at ($0.8*(TL.north west) + 0.2*(TL.east) - (0, 0.1)$);
\node[vertex] at (v) {};
\node at ($(v)+ (0.33, 0.3)$) {$v\in L_7$};

\begin{scope}[xshift = 3cm]
\node[corner] (TL) at (0,  1.5) {};
\node (C1) at ($(TL)+(0,-0.25)$) {$C_1$};
\node[corner] (TR) at (1.8,  1.5) {};
\node (C2) at ($(TR)+(0,-0.25)$) {$C_2$};
\node[corner] (B) at (0.9,  0.2) {};
\node (C3) at ($(B)+(0,-0.25)$) {$C_3$};
\coordinate (c) at (0.9, 1);

\draw[line] (TL.east) -- (TR.west);
\draw[line] (TR.south) -- (B.east);
\draw[line] (B.west) -- (TL.south);
\draw[line] (c) -- (TL.south east);
\draw[line] (c) -- (TR.south west);
\draw[line] (c) -- (B.north);

\coordinate (L1) at ($(TL.south)!0.25!(B.west)$);
\coordinate (L2) at ($(TL.south)!0.50!(B.west)$);
\coordinate (L3) at ($(TL.south)!0.75!(B.west)$);

\coordinate (T1) at ($(TL.east)!0.25!(TR.west)$);
\coordinate (T2) at ($(TL.east)!0.50!(TR.west)$);
\coordinate (T3) at ($(TL.east)!0.75!(TR.west)$);

\coordinate (R1) at ($(TR.south)!0.25!(B.east)$);
\coordinate (R2) at ($(TR.south)!0.50!(B.east)$);
\coordinate (R3) at ($(TR.south)!0.75!(B.east)$);

\node[mid] at (L1) {}; \node[mid, fill] at (L2) {}; \node[mid] at (L3) {};
\node[mid] at (T1) {}; \node[mid, fill] at (T2) {}; \node[mid] at (T3) {};
\node[mid] at (R1) {}; \node[mid, fill = blue] at (R2) {}; \node[mid] at (R3) {};

\node[mid,fill = blue] at (c) {};
\node[mid] at ($0.6*(TL.south east) + 0.4*(c)$) {};
\node[mid] at ($0.6*(TR.south west) + 0.4*(c)$) {};
\node[mid] at ($0.6*(B.north) + 0.4*(c)$) {};

\coordinate (v) at ($0.8*(TL.north west) + 0.2*(TL.east) - (0, 0.1)$);
\coordinate (vy) at (v |- TL.east);
\draw[line] (v) -- (vy) -- (TL.east);
\node[vertex,fill = red] at (v) {};
\node at ($(v)+ (0, 0.25)$) {$v$};
\node[vertex] at (vy) {};
\coordinate (H1) at ($(vy)!0.25!(TL.east)$);
\coordinate (H2) at ($(vy)!0.50!(TL.east)$);
\coordinate (H3) at ($(vy)!0.75!(TL.east)$);

\node[vertex,fill = red] at (H1) {};
\node[vertex] at (H2) {};
\node[vertex] at (H3) {};

\end{scope}

\begin{scope}[xshift = 6cm]
\node[mid, fill = black] (TL) at (0,  1.5) {};
\node[corner] (TR) at (1.8,  1.5) {};
\node (C2) at ($(TR)+(0,-0.25)$) {$C_2$};
\node[corner] (B) at (0.9,  0.2) {};
\node (C3) at ($(B)+(0,-0.25)$) {$C_3$};
\coordinate (c) at (0.9, 1);

\draw[line] (TL.east) -- (TR.west);
\draw[line] (TR.south) -- (B.east);
\draw[line] (B.west) -- (TL.south);
\draw[line] (c) -- (TL.south east);
\draw[line] (c) -- (TR.south west);
\draw[line] (c) -- (B.north);

\coordinate (L1) at ($(TL.south)!0.25!(B.west)$);
\coordinate (L2) at ($(TL.south)!0.50!(B.west)$);
\coordinate (L3) at ($(TL.south)!0.75!(B.west)$);

\coordinate (T1) at ($(TL.east)!0.25!(TR.west)$);
\coordinate (T2) at ($(TL.east)!0.50!(TR.west)$);
\coordinate (T3) at ($(TL.east)!0.75!(TR.west)$);

\coordinate (R1) at ($(TR.south)!0.25!(B.east)$);
\coordinate (R2) at ($(TR.south)!0.50!(B.east)$);
\coordinate (R3) at ($(TR.south)!0.75!(B.east)$);

\node[mid] at (L1) {}; \node[mid, fill =red] at (L2) {}; \node[mid] at (L3) {};
\node[mid] at (T1) {}; \node[mid, fill = red] at (T2) {}; \node[mid] at (T3) {};
\node[mid] at (R1) {}; \node[mid, fill = blue] at (R2) {}; \node[mid] at (R3) {};

\node[mid,fill = blue] at (c) {};
\node[mid] at ($0.6*(TL.south east) + 0.4*(c)$) {};
\node[mid] at ($0.6*(TR.south west) + 0.4*(c)$) {};
\node[mid] at ($0.6*(B.north) + 0.4*(c)$) {};
\node at ($(TL)+ (0, 0.25)$) {$b$};

\end{scope}
}
\end{tikzpicture}
    \caption{An illustration of \Cref{red: BFS} when $r=1$, $n=2$, and $B=\emptyset$. Red (or blue) vertices indicate $S$ (resp. $T$).
    Each square $C_1,C_2,C_3$ represents a component of $G_r$.
    \emph{Left}:~Suppose that a vertex $v \in C_1$ has a large layer, $2n+3=7$.
    \emph{Middle}:~Then there is a path of $7-2=5$ vertices that does not touch $\neicl{S\cup T}$, and if an agent can move to a vertex of $C_1$, it can enter some vertex on this path.
    \emph{Right}:~Then $C_1$ can absorb any agents; behave like a black hole. Thus, we compress $C_1$ to a black hole $b$.
    }
    \label{fig:kernel}
\end{figure}

\begin{red}\label{red: adj}
    If two black holes $u$ and $v$ are adjacent, we contract them into a single black hole.
\end{red}
\begin{red}\label{red: BFS}
    Let $v\in V_p$ be a vertex of a component $C$ of $G_p$. If $v$ is included in the layer $L_k$ such that $k\ge 2n+3$, we replace $C$ with a single black hole $b$, and connect $b$ to every vertex in $\neicl{C}\setminus C$ by adding edges.
\end{red}

Our core idea of \Cref{red: BFS}  is that, if there is a vertex $v\in L_k$ with $k>2$, then there is a path with $k-2$ vertices that is not touched by $\neicl{S\cup T}$.
If $k\ge 2n+3$ (\cref{fig:kernel} left), there is an independent set with $n$ vertices on the path.
Moreover, agents placed on vertices in $C$ can be reassigned to the path without breaking galactic independence (\cref{fig:kernel} center); $C$ can absorb $n$ agents and behave like a black hole.
Thus, we contract $C$ into $b$ (\cref{fig:kernel} right).
\begin{restatable}{lem}{FPTcorrect}\label{lem: FPTcorrect}
    \Cref{red: BFS,red: adj} are both safe, i.e., there is a plan in the graph after reductions if and only if there is a plan in the one before reductions.
\end{restatable}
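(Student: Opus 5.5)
We treat the two rules separately. In each case we prove both directions: a plan before the reduction yields one after, and conversely. Throughout we use that $S,T\subseteq P$ and that no black hole lies in $\neicl{S\cup T}$. We check below that this invariant survives both rules.

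\emph{Reduction rule~1.} Let $u,v\in B$ be adjacent black holes, contracted into $w$. Given a plan before contraction, we map every agent on $u$ or $v$ to $w$ and leave all other agents in place. Moves between $u$ and $v$ become stays at $w$. Moves between a planet $x$ and $u$ (or $v$) become moves between $x$ and $w$, since $w$ inherits all neighbors of $u$ and $v$. Galactic independence is unaffected, because it only concerns agents on $P$ and $\dist_P$ is computed in $G[P]$, which does not change. Conversely, take a plan after contraction. An agent entering $w$ from a planet $x$ is placed on whichever of $u,v$ is adjacent to $x$. Whenever that agent later leaves $w$ to a planet $y$ adjacent only to the other black hole, we insert one extra step in which it slides across the edge $uv$. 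To avoid interfering with the other agents, we execute these inserted steps as separate time steps in which all other agents stay still. This is allowed because black holes have unbounded capacity and the configuration on $P$ does not change during such a step.

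\emph{Reduction rule~2, forward direction (original to reduced).} Let $\Pi$ be a plan in the original graph. Every agent lying on a vertex of $C$ is mapped to $b$. Any move inside $C$ becomes a stay at $b$, and any move across the boundary of $C$ becomes a move along an edge incident to $b$, because $b$ is joined to all of $\neicl{C}\setminus C$. The planet part of the graph loses $C$ but is otherwise induced. Removing $C$ can only increase $\dist_P$ between the remaining planets, and all agents on $C$ are now on a black hole, so galactic independence is preserved. The endpoints are fine because $C\subseteq V_p$ is disjoint from $S\cup T$, and the new black hole $b$ is adjacent only to $\neicl{C}\setminus C\subseteq P\setminus (S\cup T)$. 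Here we must check that $b\notin\neicl{S\cup T}$. This holds because $C$ is a component of $G_p$, so its outer neighbours lie in $\neicl{S\cup T}\setminus(S\cup T)$ or are black holes, and hence none of them lies in $S\cup T$.

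\emph{Reduction rule~2, backward direction (reduced to original).} This is the main obstacle: we must simulate the black hole $b$ inside $C$. The key structure is a path. Take $v\in C\cap L_k$ with $k\ge 2n+3$, and a shortest $P$-path from $v$ to $S\cup T$. Its vertices at layers $2,\dots,k$ lie outside $\neicl{S\cup T}$, hence inside $C$, since they are connected to $v$ within $V_p$. This gives a path $p_2,\dots,p_k$ in $C$ with $k-1\ge 2n+2$ vertices. On this path we take the designated slots $p_3,p_5,\dots,p_{2n+1}$. These are pairwise at $P$-distance at least $2$, because the path is a shortest path and layers differ by one along it. They are also nonadjacent to anything outside $C$ except possibly via $p_2$, which we leave free as a buffer. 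We maintain the invariant that the agents "in $b$" occupy the slots farthest from the boundary of $C$.

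When an agent enters $b$ from a boundary vertex $x$, we instead route it through $C$ to the nearest free slot. We do this with a sequence of inserted time steps in which all other agents stay still. Each inserted step moves only agents inside $C$: the already-parked agents shift one slot deeper along the path, and the newcomer walks along a shortest path in $C$ to $p_2$ and then onto the path. Exits are handled symmetrically. The delicate point is that an agent walking through $C$ must not come within distance $1$ of another parked agent or of a planet agent just outside $C$. We handle this by shuffling the parked agents outward (or inward) along the path so that the walker always faces an empty prefix. Outside agents adjacent to $C$ are fixed during the inserted steps and were independent of $x$ in the reduced plan, so they cause no conflict; the only new neighbours the walker meets are inside $C$, which we keep empty apart from the walker itself. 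We also need at least $n$ slots plus a buffer, which is exactly what $k\ge 2n+3$ guarantees. In this direction galactic independence only tightens, since $C$ now contains planets, and the parking argument is where the quantitative bound is used.
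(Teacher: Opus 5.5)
Your forward directions and your treatment of Reduction rule~\ref{red: adj} agree with the paper (Lemmas~\ref{FPT:reducebefore} and~\ref{lem:adjsafe}), and parking agents on a long geodesic inside $C$ is also the paper's idea. The gap is at the interface between $C$ and $\nei{C}$ in the backward direction of Reduction rule~\ref{red: BFS}.

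You assert two things: frozen outside agents are harmless because they were independent of $x$, and the walker's only new neighbours lie inside $C$. The second claim is false. The first vertex $y\in C$ the walker steps onto is a layer-$2$ vertex, and it may be adjacent to a second layer-$1$ vertex $x'\in\nei{C}$. In the reduced graph, $x$ and $x'$ can be occupied simultaneously, because their common neighbour $y$ has been absorbed into $b$, which galactic independence ignores.

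Concretely, let $\nei{x}\cap C=\{y\}$ with $yx'\in E$, and suppose the reduced plan does the following:
\begin{itemize}
\item it keeps an agent $j$ on $x'$ at steps $t$ and $t+1$;
\item it moves $i$ from $x$ into $b$;
\item at step $t+1$ it brings some agent $k$ next to $x$, which is legal since $i$ is then in $b$.
\end{itemize}
In your simulation, $j$ is frozen on $x'$ during every inserted step and is still there after the step, so $i$ can never stand on $y$. Nor can $i$ stay on $x$, because of $k$. Hence no insertion of steps in which everybody else stands still can simulate this plan. Exits suffer from the same obstruction, since the exiting agent must wait on a vertex of $C$ next to $x$ while the other agents are still at their step-$t$ positions.

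What is missing is a normalisation of the reduced plan before you simulate it. The paper provides this in Lemmas~\ref{lem:nextempty} and~\ref{lem: blackhole}: it inserts detours in which agents of $\neicl{b}$ temporarily dive into $b$ and come back. The result is a plan in which no two consecutive configurations have an agent in $\nei{b}$, and at most one agent enters or leaves $b$ per step. Then, when $i$ crosses from $x$ into $b$, every other agent is outside $\neicl{b}$ at the next step. So in $G$ the move of $i$ onto $y$ can be executed simultaneously with their moves, and the walk inside $C$ happens while $\nei{C}$ is empty.

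Two smaller points also need fixing. First, walking to $p_2$ through $C$ may bring the walker next to deep slots. The paper instead walks along a shortest path to the parking path, so only the last vertex before the path touches it, and it touches at most three consecutive path vertices, which the parked agents can vacate. Second, on an exit the agent named by the reduced plan may be parked deep. You must let the nearest parked agent leave instead and swap the two identities in the remainder of the plan. This is legitimate because agents inside $b$ are interchangeable.
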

The reductions terminate in linear time.
Consider an instance $(G,S,T)$ of \giumapf, where \Cref{red: BFS,red: adj} are iteratively applied until no updates are available.
One can observe that every planet $p$ has its layer less than $2n+2$;
otherwise $p$ is contracted into a black hole.
Moreover, the size of $B$ is bounded by $|\neicl{P}\setminus P|$.
This leads to \Cref{thm: kernel}.
\begin{restatable}{thm}{kerneldeg}\label{thm: kernel}
    \giumapf admits a kernel with $\Delta^{O(n)}n$ vertices.
\end{restatable}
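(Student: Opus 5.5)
We apply \Cref{red: BFS,red: adj} exhaustively to the input instance $(G,S,T)$ and output the resulting galactic graph with the same $S,T$. By \Cref{lem: FPTcorrect} each application is safe, so the final instance is equivalent to the original. Each application either removes a component of $G_p$ (replacing it by a single vertex) or merges two black holes, so there are at most $|V|$ applications, each computable by BFS in polynomial time. Hence it only remains to bound the number of vertices of the reduced graph by $\Delta^{O(n)}n$. We count planets and black holes separately, and we note that reductions only contract, so the maximum degree of planets does not increase beyond $\Delta$.

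\textbf{Planets.} After exhaustion, every planet lies in some layer $L_k$ with $k\le 2n+2$. For a planet in $\neicl{S\cup T}$ this is immediate. For a planet in $V_p$, the reduction would otherwise fire, since a vertex of layer $\ge 2n+3$ in a component $C$ of $G_p$ triggers \Cref{red: BFS}. Thus $P\subseteq \bigcup_{k\le 2n+2}L_k$, i.e.\ every planet is within $\dist_P$-distance $2n+2$ of $S\cup T$. The ball of radius $2n+2$ around a vertex in $G[P]$ has at most $1+\Delta\sum_{i<2n+2}(\Delta-1)^i\le \Delta^{2n+3}$ vertices. Summing over the $|S\cup T|\le 2n$ centres gives $|P|\le 2n\,\Delta^{2n+3}=\Delta^{O(n)}n$.

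\textbf{Black holes.} By \Cref{red: adj}, no two black holes are adjacent, so every black hole's neighbours are planets. Since $G$ is connected and $P$ is nonempty (it contains $S$), every black hole has at least one planet neighbour. Hence each black hole is charged to a neighbouring planet, and $|B|\le |\neicl{P}\setminus P|\le \Delta|P|$, which gives $|B|\le \Delta^{O(n)}n$ as well.

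\textbf{Main obstacle.} The delicate step is the degree argument. Contracting a component $C$ into a black hole $b$ joins $b$ to all of $\neicl{C}\setminus C$, so $b$ may have huge degree. However, each planet's degree is unchanged: it loses neighbours in $C$ and gains at most one neighbour $b$ per contracted component it touched, and \Cref{red: adj} merges black holes. The one thing to check is that a planet's new black-hole neighbours never exceed the number of its original neighbours. This holds because each new black-hole neighbour replaces at least one original neighbour. Consequently, the ball-size bound over $G[P]$ and the charging bound for $B$ both use only planet degrees $\le\Delta$, and black hole degrees never enter the count.
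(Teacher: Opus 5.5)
Your proposal is correct and follows essentially the same route as the paper: you bound the planets by the layer argument, getting $|P|\le 2n\Delta^{2n+3}$ because every surviving planet lies in some $L_k$ with $k\le 2n+2$, and you bound the black holes by $\Delta|P|$, using that no two black holes remain adjacent after \Cref{red: adj} and that planet degrees never increase. Your explicit arguments that planet degrees do not increase and that every black hole has a planet neighbour (by connectivity) spell out steps the paper only asserts.
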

Moreover, if we are given an induced grid graph, an induced subgraph of the Cartesian product of two paths,
the upper bound of $|P|$ is much smaller.
Note that induced grid graphs are commonly used in empirical evaluation of the {\mapf} algorithms~\cite{mapf:stern19}.
\begin{restatable}{thm}{kernelgrid}\label{thm: kernelgrid}
    \giumapf admits a kernel with $O(n^3)$ vertices when an induced grid graph is given.
\end{restatable}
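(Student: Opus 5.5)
We run the same reduction procedure as for \Cref{thm: kernel}: apply \Cref{red: BFS,red: adj} exhaustively, which is safe by \Cref{lem: FPTcorrect}. We then bound the number of remaining planets and black holes using the geometry of induced grid graphs instead of the generic $\Delta^{O(n)}$ ball-growth bound. The key observation is that after exhaustive reduction, every planet $p$ lies in a layer $L_k$ with $k\le 2n+2$. Otherwise $p$ is either in $\neicl{S\cup T}$, which has layer at most $1$, or in a component of $G_p$ that \Cref{red: BFS} would have contracted. Hence every planet is within $\dist_P$-distance $2n+2$ of some vertex of $S\cup T$, and $|S\cup T|\le 2n$.

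The first step bounds $|P|$. In an induced grid graph $H$, distances in $H[P]$ are at least the $\ell_1$ (Manhattan) distances in $\mathbb{Z}^2$, because $H[P]$ is a subgraph of the infinite grid. So every planet lies in the $\ell_1$-ball of radius $2n+2$ around some $s\in S\cup T$. Such a ball contains $2(2n+2)^2+2(2n+2)+1=O(n^2)$ lattice points. Summing over the at most $2n$ centers gives $|P|=O(n^3)$. The preprocessing and the reductions do not move vertices, and \Cref{red: BFS} only deletes planets. Therefore the planets of the reduced instance still form an induced subgraph of the original grid embedding, and this Manhattan argument applies to them directly.

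The second step bounds $|B|$, which is the part needing care. The reduced graph is no longer a grid, since black holes are new vertices. By exhaustive application of \Cref{red: adj}, however, no two black holes are adjacent, and the graph stays connected. So every black hole has a planet neighbour, and each black hole came from contracting components of $G_p$ together with earlier black holes. I would argue that every black hole $b$ is adjacent to some planet $p\in P$ and that $\neicl{b}\cap P\subseteq \neicl{P}$. Distinct black holes then correspond to distinct components of the grid minus the retained planets that touch $P$. Each planet has at most $4$ grid neighbours, so the number of such components, and hence $|B|$, is at most $4|P|=O(n^3)$. The delicate point is checking that two different black holes cannot both be charged to the same grid neighbour of a planet. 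Two black holes with a common grid neighbour outside $P$ would have arisen from components joined through a non-planet vertex. That vertex must itself have been absorbed into one black hole, which would make the two black holes adjacent, contradicting the exhaustive application of \Cref{red: adj}.

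Combining the two steps, the kernel has $|P|+|B|\le 5|P|=O(n^3)$ vertices, and it is computed in polynomial time. I expect the main obstacle to be the black-hole count in the second step, namely making the charging argument rigorous when the original instance may already contain black holes. If a direct proof is messy, I would fall back on the bound $|B|\le|\neicl{P}\setminus P|$ stated before \Cref{thm: kernel}, together with maximum degree $\Delta\le 4$ in a grid, to conclude $|B|\le 4|P|$.
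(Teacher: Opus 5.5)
Your proposal is correct and follows essentially the same route as the paper. After exhaustive reduction every planet lies within $\dist_P$-distance $2n+2$ of $S\cup T$. The grid geometry (your $\ell_1$-ball, the paper's box of width $2k$ around each of the at most $2n$ centers) then gives $|P|=O(n^3)$, and $|B|\le|\neicl{P}\setminus P|\le 4|P|$ because contractions never increase planet degrees. You do not need the component-charging argument for the black holes. Your fallback (each black hole has a planet neighbour, and each planet has at most $4$ neighbours) is exactly the bound the paper uses.
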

\begin{restatable}{cor}{galFPT}\label{cor: FPT}
    {\iumapf} admits an {\FPT} algorithm when parameterized by $\Delta+n$.
\end{restatable}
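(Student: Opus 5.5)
The plan is to combine the kernel of \Cref{thm: kernel} with an exhaustive search over configurations of the kernel. Let $(G=(V,E),S,T)$ be an instance of \iumapf, i.e., $1\iumapf$. We view it as an instance of \giumapf with $(P,B)=(V,\emptyset)$; as noted above, this is an equivalent instance. We then apply \Cref{red: BFS,red: adj} exhaustively. By \Cref{lem: FPTcorrect}, this produces an equivalent \giumapf instance $(G'=(P'\cup B',E'),S,T)$. By \Cref{thm: kernel}, $G'$ has $N\le \Delta^{O(n)}n$ vertices. The reductions run in linear time, so this step is polynomial.

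One point needs care: \Cref{thm: kernel} bounds the kernel by the maximum degree of the graph on which the layers are computed. Contractions into black holes could create vertices of large degree, so I will confirm that the bound is expressed in terms of $\Delta$ of the \emph{original} $G$. The argument is that every remaining planet has layer less than $2n+2$ with respect to $S\cup T$ in $G[P']$. Balls of radius $2n+2$ around the $2n$ vertices of $S\cup T$ in the original graph contain at most $2n\cdot\Delta^{O(n)}$ vertices, and planets and their adjacencies inside $P'$ are inherited unchanged from $G$. Black holes are pairwise non-adjacent after \Cref{red: adj}, and each is adjacent to a planet, so their number is at most $|N[P']|$. That is at most $\Delta|P'|$ if planet degrees are original degrees; I believe \Cref{red: BFS} only adds edges from $b$ to vertices of $N[C]\setminus C$, and a planet's degree does not increase because it loses its neighbours in $C$ and gains at most one edge to $b$. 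This degree bookkeeping is the main obstacle, but it is already encapsulated in \Cref{thm: kernel}, so I will simply invoke it.

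Next I decide the kernel instance by search. Since the problem is unlabeled, a galactic configuration is a multiset of $n$ agents on $N$ vertices, so there are at most $\binom{N+n-1}{n}\le (N+n)^n$ configurations. Planets carry at most one agent, black holes carry up to $n$, and agents are anonymous. I build the state graph whose nodes are the galactic independent configurations. Two nodes are joined when one configuration can be obtained from the other in one step, meaning there is an assignment of agents to closed neighbours; this can be checked, for example, by a bipartite matching or flow computation in time polynomial in $N$ and $n$. Reachability from $S$ to $T$ in this state graph is then decided by BFS.

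The total running time is $(N+n)^{O(n)}\cdot\mathrm{poly}(N)$ plus the polynomial preprocessing. With $N=\Delta^{O(n)}n$, this is $\Delta^{O(n^2)}n^{O(n)}\cdot\mathrm{poly}(|V|)$, which is of the form $h(\Delta+n)\cdot \mathrm{poly}(|x|)$ for a computable $h$. Correctness follows from the equivalence chain: \iumapf is equivalent to \giumapf with $B=\emptyset$ by definition, which is equivalent to the kernel by \Cref{lem: FPTcorrect}, and the kernel is decided by exact search. This yields the claimed \FPT algorithm.
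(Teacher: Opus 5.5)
Your proposal is correct and takes the same route as the paper's proof: you recast the instance as \giumapf with $(P,B)=(V,\emptyset)$, invoke \Cref{thm: kernel} to obtain $\Delta^{O(n)}n$ vertices, and finish by exhaustive search. Beyond the paper, you spell out three things it leaves implicit, and all of them are sound: the search runs over multiset galactic configurations (needed because the kernel contains black holes) with a flow-based check for one-step transitions, the total time is $\Delta^{O(n^2)}n^{O(n)}\cdot\mathrm{poly}(|V|)$, and the kernel bound holds with respect to the original $\Delta$.
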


\paragraph{Extension to \textsc{Distance-$r$} variant.}
Recall that our idea is to find a path of sufficiently long length that can ``absorb'' $n$ agents.
In the distance-$r$ variant, an analogous argument holds for a path of length $(r+1)(n+1)-1$.
Therefore, it suffices to modify Reduction rule~\ref{red: BFS} to find a vertex with the layer $L_k$ with $k \ge (r+1)(n+1)-1+(r+1)=(r+1)(n+2)-1$, ensuring that the path does not touch the distance-$r$ neighbors of $S \cup T$.
From the same discussion on \Cref{thm: kernel}, the problem admits a kernel of size $\Delta^{O(rn)}n$.

\section{Configuration Generator-Based Algorithm}\label{sec:PIBT}
\renewcommand{\algorithmicrequire}{\textbf{Input:}}
\renewcommand{\algorithmicensure}{\textbf{Output:}}
Although the \ilp-based algorithm guarantees solution quality, it has scalability limitations, as we observe in \Cref{sec:ILPexperiment}.
To complement this limitation, we propose a configuration generator-based algorithm \searchname, scalable for large instances.
\searchname consists of the configuration generator \algname that supports distance-$r$ independence and a search scheme \lacam with tuning for the unlabeled setting.
We first present \algname, and then search.

\subsection{Configuration Generator \algname}\label{subsec:alg}
\algname takes the current configuration $\mathcal{Q}^{\text{from}}$, the target $T$, and a bijection $g: A \to T$ (used for a target assignment) as input to compute the configuration $\mathcal{Q}^{\text{to}}$ and a new bijection $g': A \to T$ for the next time step. 
Note that an initial bijection can be obtained by some bipartite matching algorithm (e.g., the Hungarian method) with distance evaluation.
\paragraph{Concept.}
\algname is based on rotation-free PIBT~\cite{mapf:PIBT}, which sequentially determines
$\qto{i}$ of each agent $i\in A$ and forbids rotations of agents.
In this paper, \algname generates $\mathcal{Q}^{\text{to}}$ while maintaining distance-$r$ independence, and avoiding \emph{distance-$r$ rotations}, namely a sequence of agents $(a_1,a_2,\ldots,a_k)$ such that
$\qto{a_{i}}\in\neiclr{r}{\qfrom{a_{i+1}}}$ for $i\in [1,k-1]$, and $\qto{a_k}\in\neiclr{r}{\qfrom{a_{1}}}$.
Further, inspired by the suboptimal \emph{unlabeled} \mapf work~\cite{mapf:tswap}, 
\rev{we insert swaps and rotations of target assignments in \algname.}
\begin{algorithm}[t]
    \caption{\algname, generator for \iumapf}
    \label{alg:PIBT}
    {\small
    \begin{algorithmic}[1]  
        \Require configuration $\mathcal{Q}^{\text{from}}$, goals $T$, \textcolor{blue}{bijection $g$}
        \Ensure configuration $\mathcal{Q}^{\text{to}}$ and new bijection $g'$
        \While{\textcolor{blue}{$\exists$ deadlock}} \textcolor{blue}{rotate targets to resolve it}\label{pibt:deadlock} \Comment{rule 1}
        \EndWhile
        
        \For{$i\in A$}
        \If{$\qto{i}=\bot$}~$\Call{\procname}{i,\textcolor{blue}{\epsilon},g}$ \Comment{$\epsilon$: empty list}     
        \EndIf
        \EndFor
        \State \Return $\mathcal{Q}^{to},g$\label{pibt:top_e}
        \Statex
        \Procedure{\procname}{$i,\textcolor{blue}{S},g$}\label{pibt:proc_s}
        \State Sort $v\in N[\qfrom{i}]$ in ascending order of $\dist(v, \goal{i})$
        \For{$v\in N[\qfrom{i}]$}\label{pibt:loop_candidate}

        \If{$\exists j$ s.t. $\qto{j}\ne \bot \land \qto{j}\in \textcolor{blue}{\neiclr{r}{v}}$}~\textbf{continue};\label{pibt:dist}
        \EndIf
        \If{\textcolor{blue}{$\exists j$ s.t. $\qfrom{j}\in\neiclr{r}{v}\land j\in S$}}~\textbf{continue};\label{pibt:rotation}
        \EndIf
        
        \State $\qto{i}\leftarrow v$, $f\leftarrow \true$\Comment{temporal assignment}\label{pibt:tmp}
        \If{$\exists k\coloneq \Call{swap}{i,v}$}~ swap \textcolor{blue}{$\goal{i}$ and $\goal{k}$}\Comment{rule 2}\label{pibt:swap}
        \EndIf
        \For{\textcolor{blue}{$u\in \neiclr{r}{v}$}}\label{pibt:checknei}
        \If{$\exists j$ s.t. $\qfrom{j}=u\land j\ne i$}\label{pibt:existagent}
        \If{$\qto{j}=\bot$}~$\Call{\procname}{j,\textcolor{blue}{S+[i]},g}$\label{pibt:recursive}
        \EndIf
        \If{$\qto{j}\in \neiclr{r}{v}$}~~$f\leftarrow \false$; \textbf{break};
        \EndIf
        \EndIf
        \EndFor \label{pibt:checknei_end}
        \If{$f=\true$}~\Return $\valid$ \label{pibt:valid}
        \EndIf
        \State\textcolor{blue}{swap back $\goal{i}$ and $\goal{k}$ if $k$ exists;} \label{pibt:loop_e}
        
        \EndFor \label{pibt:loop_candidate_end}
        \State $\qto{i}\leftarrow \qfrom{i}$;
        ~\Return $\invalid$
        \EndProcedure\label{pibt:proc_e}
    \end{algorithmic}
    }
\end{algorithm}

\begin{figure}
    \centering
    \begin{tikzpicture}[scale = 0.5]
  \tikzset{vertex/.style={circle, draw, minimum size=9pt,inner sep=0pt }}
  \tikzset{agent/.style={circle, draw,fill = black!20,minimum size=9pt,inner sep=0pt }}
  \tikzset{>={Stealth[length=0.7mm,width=1mm]}}
  
\tikzset{
  mydashed/.style={dash pattern=on 2pt off 1.5pt}
}

{\scriptsize
\newcommand{\DrawGridGraph}[2]{%
\foreach \j in {1,...,3}{%
  \foreach \i in {1,...,4}{%
    \ifnum\j=2\relax
      \ifnum\i=1\relax
      \else
        \node[vertex] (v\i\j) at ({(\i-1)*#1},{(\j-1)*#2}) {};
      \fi
    \else
      \node[vertex] (v\i\j) at ({(\i-1)*#1},{(\j-1)*#2}) {};
    \fi
  }%
}%

  \draw (v11) -- (v21);
  \draw (v21) -- (v31);
  \draw (v31) -- (v41);
  \draw (v22) -- (v32);
  \draw (v13) -- (v23);
  \draw (v23) -- (v33);
  \draw (v33) -- (v43);
  
  \draw (v31) -- (v32);
  \draw (v41) -- (v42);
  \draw (v22) -- (v23);
  \draw (v32) -- (v33);
  \draw (v42) -- (v43);

  \node[agent] (a1) at (1*#1, 1*#2) {$a$};
  \node[agent] (a2) at (2*#1, 2*#2) {$b$};
  \node[agent] (a3) at (3*#1, 1*#2) {$c$};
  \node[agent] (a4) at (2*#1, 0*#2) {$d$};
  \node[agent] (a5) at (0*#1, 2*#2) {$e$};
  }

\DrawGridGraph{1.2}{1}{v}

\draw[->, black,mydashed] (a1) to (v33);
\draw[->, black,mydashed,bend left = 40] (a2) to (v31);
\path[->, black,mydashed] (a3) edge[loop right, looseness=10, min distance=5mm] (v42);
\draw[->, black,mydashed,bend left = 40] (a4) to (v11);
\path[->, black,mydashed] (a5) edge[loop, out=-50, in=-20, looseness=10, min distance=5mm] (v13);
\draw[->, blue,bend left = 30, very thick] (a1) to (v23);
\draw[-{Stealth[length=3.2mm, width=2.4mm]}, line width=1.5pt]
  (-0.6, -0.3) to[out=240, in=120] (-0.5,-2);
  \draw[-{Stealth[length=3.2mm, width=2.4mm]}, line width=1.5pt]
  (3.7, -1.5) to (5,0.8);

\begin{scope}[yshift = -4cm]
\DrawGridGraph{1.2}{1}{v}
\draw[->, black,mydashed] (a1) to (v33);
\draw[->, black,mydashed,bend left = 40] (a2) to (v31);
\path[->, black,mydashed] (a3) edge[loop right, looseness=10, min distance=5mm] (v42);
\draw[->, black,mydashed,bend left = 40] (a4) to (v11);
\path[->, black,mydashed] (a5) edge[loop, out=-50, in=-20, looseness=10, min distance=5mm] (v13);
\draw[->, blue,bend left = 30] (a1) to (v23);
\draw[->, blue,bend right = 30, very thick] (a2) to (v32);
\end{scope}

\begin{scope}[xshift = 5.5cm]
\DrawGridGraph{1.2}{1}{v}
\draw[->, black,mydashed] (a1) to (v33);
\draw[->, black,mydashed,bend left = 40] (a2) to (v31);
\path[->, black,mydashed] (a3) edge[loop right, looseness=10, min distance=5mm] (v42);
\draw[->, black,mydashed,bend left = 40] (a4) to (v11);
\path[->, black,mydashed] (a5) edge[loop, out=-50, in=-20, looseness=10, min distance=5mm] (v13);
\draw[->, blue,bend left = 30] (a1) to (v23);
\draw[->, blue,bend left = 30, very thick] (a2) to (v43);
\draw[-{Stealth[length=3.2mm, width=2.4mm]}, line width=1.5pt]
  (-0.6, -0.3) to[out=240, in=120] (-0.5,-2);
  \draw[-{Stealth[length=3.2mm, width=2.4mm]}, line width=1.5pt]
  (3.7, -1.5) to (5,0.8);
\end{scope}

\begin{scope}[xshift = 5.5cm, yshift = -4cm]
\DrawGridGraph{1.2}{1}{v}
\draw[->, black,mydashed] (a1) to (v33);
\draw[->, black,mydashed] (a2) to (v42);
\draw[->, black,mydashed] (a3) to (v31);
\draw[->, black,mydashed,bend left = 40] (a4) to (v11);
\path[->, black,mydashed] (a5) edge[loop, out=-50, in=-20, looseness=10, min distance=5mm] (v13);
\draw[->, blue,bend left = 30] (a1) to (v23);
\draw[->, blue,bend left = 30] (a2) to (v43);
\draw[->, blue,bend left = 30, very thick] (a3) to (v41);
\draw[->, blue,bend right = 30, very thick] (a4) to (v21);
\end{scope}

\begin{scope}[xshift = 11cm]
\DrawGridGraph{1.2}{1}{v}
\draw[->, black,mydashed] (a1) to (v33);
\draw[->, black,mydashed] (a2) to (v42);
\draw[->, black,mydashed] (a3) to (v31);
\draw[->, black,mydashed,bend left = 40] (a4) to (v11);
\path[->, black,mydashed] (a5) edge[loop, out=-50, in=-20, looseness=10, min distance=5mm] (v13);
\draw[->, blue,bend left = 30] (a1) to (v23);
\draw[->, red,bend left = 30] (a2) to (v43);
\draw[->, red,bend left = 30] (a3) to (v41);
\draw[->, red,bend right = 30] (a4) to (v21);
\draw[->, blue,bend left = 30, very thick] (a5) to (v23);
\draw[-{Stealth[length=3.2mm, width=2.4mm]}, line width=1.5pt]
  (-0.6, -0.3) to[out=240, in=120] (-0.5,-1.8);
\end{scope}

\begin{scope}[xshift = 11cm, yshift = -4cm]
\DrawGridGraph{1.2}{1}{v}
\draw[->, black,mydashed] (a1) to (v33);
\draw[->, black,mydashed] (a2) to (v42);
\draw[->, black,mydashed] (a3) to (v31);
\draw[->, black,mydashed,bend left = 40] (a4) to (v11);
\path[->, black,mydashed] (a5) edge[loop, out=-50, in=-20, looseness=10, min distance=5mm] (v13);
\draw[->, blue,bend left = 30, very thick] (a1) to (v32);
\draw[->, red,bend left = 30] (a2) to (v43);
\draw[->, red,bend left = 30] (a3) to (v41);
\draw[->, red,bend right = 30] (a4) to (v21);
\path[->, red] (a5) edge[loop left, looseness=10, min distance=5mm] (v13);
\end{scope}

}
{\small
\foreach \x in {1,...,4}{
    \node at ({(\x-1)*1.2},2.7) {\x};
    }

\foreach \x in {1,...,3}{
    \node at (-0.7,{(1-\x)*1+2}) {\x};
    }
}
\end{tikzpicture}
    \caption{An example of the \algname's operation when $r{=}1$. The graph $G$ is a part of a grid, and $v_{\texttt{i-j}}$ denotes the vertex with row $i$ ($\downarrow$) and column $j$ ($\rightarrow$).
    Dashed arrows represent the assignment $g$, red and blue arrows represent the next location of each agent (red if fixed, and blue if temporal).
    \emph{Column 1:} {\algname} calls {\procname} for agent $a$, and temporary decides $\qto{a}=v_{\texttt{1-2}}$. Then it starts to check neighbors, and first calls {\procname} for $b$ with $S=[a]$. Since moving $b$ to $v_{\texttt{2-3}}$ causes a distance-$1$ rotation, this fails.
    \emph{Column 2:} Then {\procname} for $b$ reassigns $\qto{b}=v_{\texttt{1-4}}$, and it swaps targets of $b$ and $c$ since $c=$\textsc{SWAP}$(b,v_{\texttt{1-4}})$. Recursive calls to $c$ and $d$ occur.
    \emph{Column 3:} Calls for $b,c,d$ succeed, and {\procname} next checks whether $e$ can safely move out of $\neicl{\qfrom{a}}$. 
    Agent $e$ cannot move to any vertex and decides to stay at $v_{\texttt{1-1}}$.
    Since $a$ can no longer move to $v_{\texttt{1-2}}$, it next chooses another vertex $v_{\texttt{2-3}}$ as temporal $\qto{a}$.
    This does not break the independence of $\mathcal{Q}^{\text{to}}$; the assignment succeeds.
    This yields the next configuration $\mathcal{Q}^{\text{to}}$ and a new goal assignment $g'$.
    } 
    \label{fig:pibt}
\end{figure}

\paragraph{Algorithm.}
\Cref{alg:PIBT} illustrates \algname, consisting of 
\begin{inparaenum}
    \item[\emph{(i)}] top-level procedure (line \ref{pibt:deadlock}--\ref{pibt:top_e}), which updates the bijection $g$ and the
priorities $p$ assigned to target vertices, and 
    \item[\emph{(ii)}] recursive function (line \ref{pibt:proc_s}--\ref{pibt:proc_e}), which determines the next
location $\qto{i}$ of each agent $i\in A$ while maintaining distance-$r$
independence and avoiding distance-$r$ rotations.
\end{inparaenum}
Blue lines indicate the differences from the original rotation-free PIBT.
A step-by-step example of execution is shown in \Cref{fig:pibt}.

\paragraph{\emph{(i)} Top-level procedure. (line \ref{pibt:deadlock}--\ref{pibt:top_e})}
Given a configuration $\mathcal{Q}^{\text{from}}$, goals $T$, a current bijection $g:A\to T$, {\algname} first detects a \emph{deadlock}.
A deadlock is a cycle of agents such that each agent is blocked by the next agent on its shortest path of length $(r+1)$
toward the assigned target.
When such a cycle is found, {\procname} rotates the targets along the cycle (\cref{fig_rules}a right).
The deadlock can be detected in a depth-first manner: see the Appendix for details.
Then, {\algname} \rev{sequentially} calls {\procname} for every agent $i$ whose next location remains undecided \rev{(i.e., $\qto{i}=\bot$)}.

\paragraph{\emph{(ii)} Procedure \procname. (line \ref{pibt:proc_s}--\ref{pibt:proc_e})}
For an agent $i$, the recursive procedure scans the vertices in $\neicl{\qto{i}}$ in increasing order of their distance to $\goal{i}$.
A candidate $v$ is rejected if there is an agent $j$ such that $\qto{j}\in \neiclr{r}{v}$ or $\qfrom{j}\in \neiclr{r}{v}$ and $j\in S$.
The latter one prevents distance-$r$ rotation, since accepting such a candidate would cause a cyclic dependency along the current recursion.
If $v$ passes these tests, {\procname} temporarily sets $\qto{i}\leftarrow v$ and recursively processes agents
currently located in $\neiclr{r}{v}$. 
If all of them obtain next locations outside $\neiclr{r}{v}$, then $v$ is fixed as $\qto{i}$; otherwise, the choice is cancelled, and it tries the next candidate. 
If no candidate succeeds, $i$ stays at $\qfrom{i}$.

\paragraph{Target swapping. (line \ref{pibt:swap}, \ref{pibt:loop_e})} 
During the recursive procedure, {\algname} may temporarily swap targets.
When an agent $i$ already located at $\goal{i}$ blocks another agent $j$ from approaching its target $\goal{j}$ (\cref{fig_rules}b center), {\procname} swaps $\goal{i}$ and $\goal{j}$ before the recursive calls.
The swap is kept only if the recursive call succeeds (\cref{fig_rules}b left), and is reverted otherwise (right).
Pseudocode for the function $\textsc{SWAP}$ is provided in the Appendix.

\begin{figure}
\begin{minipage}[t]{0.5\columnwidth}
    \begin{tikzpicture}[scale = 0.42]
  \tikzset{vertex/.style={circle, draw, minimum size=8pt,inner sep=0pt }}
  \tikzset{agent/.style={circle, draw,fill = black!20,minimum size=8pt,inner sep=0pt }}
  \tikzset{>={Stealth[length=1mm,width=1mm]}}
  \tikzset{
  thickarrow/.style={
    line width=3pt,
    >={Stealth[length=2mm,width=5mm]}
  }
}
\tikzset{
  mydashed/.style={dash pattern=on 2pt off 1.5pt}
}
\begin{scope}[rotate = 90]
{\scriptsize
  \node[vertex] (v1) at (0,0) {};
  \node[vertex] (v2) at (-0.75,-0.75) {};
  \node[vertex] (v3) at (-1.5,-1.5) {};
  \node[vertex] (g1) at (-2.25,-2.25) {};
  \node[vertex] (v4) at (-0.75,-2.25) {};
  \node[vertex] (v5) at (0,-3) {};
  \node[vertex] (g3) at (0.75,-3.75) {};
  \node[vertex] (v6) at (0.75,-2.25) {};
  \node[vertex] (v7) at (1.5,-1.5) {};
  \node[vertex] (g5) at (2.25,-0.75) {};
  \node[vertex] (v8) at (0.75,-0.75) {};
  \node[vertex] (g7) at (-0.75, 0.75) {};
  \node[agent] (a1) at (0,0) {$i$};
  \node[agent] (a2) at (-1.5,-1.5) {$j$};
  \node[agent] (a3) at (0,-3) {};
  \node[agent] (a4) at (1.5,-1.5) {};
  
  \foreach\x in {1,2,3,4,5,6,7,8}{
  \pgfmathtruncatemacro{\y}{mod(\x, 8)+1};
    \draw (v\x) -- (v\y); 
  }
  \foreach\x in {1,3,5,7}{
  \pgfmathtruncatemacro{\y}{mod(\x+6, 8)};
  \draw[->, black, bend right=60] (v\x) to (g\x);
   \draw[mydashed] (v\x) -- (g\y); 
  } 
\draw[->,thickarrow, bend left=40] (1.8,-3.5) to (1.8,-5.5);
\begin{scope}[yshift = -6cm]
    \node[vertex] (v1) at (0,0) {};
  \node[vertex] (v2) at (-0.75,-0.75) {};
  \node[vertex] (v3) at (-1.5,-1.5) {};
  \node[vertex] (g1) at (-2.25,-2.25) {};
  \node[vertex] (v4) at (-0.75,-2.25) {};
  \node[vertex] (v5) at (0,-3) {};
  \node[vertex] (g3) at (0.75,-3.75) {};
  \node[vertex] (v6) at (0.75,-2.25) {};
  \node[vertex] (v7) at (1.5,-1.5) {};
  \node[vertex] (g5) at (2.25,-0.75) {};
  \node[vertex] (v8) at (0.75,-0.75) {};
  \node[vertex] (g7) at (-0.75, 0.75) {};
  \node[agent] (a1) at (0,0) {$i$};
  \node[agent] (a2) at (-1.5,-1.5) {$j$};
  \node[agent] (a3) at (0,-3) {};
  \node[agent] (a4) at (1.5,-1.5) {};
  
  \foreach\x in {1,2,3,4,5,6,7,8}{
  \pgfmathtruncatemacro{\y}{mod(\x, 8)+1};
    \draw (v\x) -- (v\y); 
  }
  \foreach\x in {1,3,5,7}{
  \pgfmathtruncatemacro{\y}{mod(\x+6, 8)};
  \draw[->, black, bend right=60] (v\x) to (g\y);
   \draw[mydashed] (v\x) -- (g\y); 
  } 
\end{scope}

}
\end{scope}
\node at (4.5,-3.5) {(a)};
\end{tikzpicture}
    \end{minipage}
    \hspace{0.07\columnwidth}
    \begin{minipage}[t]{0.17\columnwidth}
        \begin{tikzpicture}[scale = 0.25]
  \tikzset{vertex/.style={circle, draw, minimum size=8pt,inner sep=0pt }}
  \tikzset{agent/.style={circle, draw,fill = black!20,minimum size=8pt,inner sep=0pt }}
  \tikzset{>={Stealth[length=2mm,width=1mm]}}
  \tikzset{
  thickarrow/.style={
    line width=3pt,
    >={Stealth[length=2mm,width=3mm]}
  }
}
\tikzset{
  mydashed/.style={dash pattern=on 2pt off 1.5pt}
}
\begin{scope}[rotate = 270,yscale = 1.15]

{\scriptsize
  \node[vertex] (v1) at (0,0) {};
  \node[vertex] (v2) at (2,0) {};
  \node[vertex] (v3) at (4,0) {};
  \node[vertex] (v4) at (6,0) {};
  \node[vertex] (v5) at (8,0) {};
  
  \node[agent] (a1) at (0,0) {$i$};
  \node[agent] (a2) at (4,0) {$j$};

  \draw[mydashed] (v4) -- (v5);
  \draw (v1) -- (v2);
  \draw (v2) -- (v3);
  \draw (v3) -- (v4);

  \draw[->, blue, bend left=40] (v1) to (v2);
  \draw[->, black, mydashed, bend right=30] (v1) to (v5);
  \path[->, black, mydashed] (v3) edge[loop above, looseness=15, min distance=8mm] (v3);

\draw[->,thickarrow, bend right = 30] (-0.5,-1) to node[midway,xshift = -8pt, yshift=7pt, inner sep=2pt] {$j:\invalid$} (-0.5,-3.5);
\begin{scope}[yshift = 4cm]
  \node[vertex] (v1) at (0,0) {};
  \node[vertex] (v2) at (2,0) {};
  \node[vertex] (v3) at (4,0) {};
  \node[vertex] (v4) at (6,0) {};
  \node[vertex] (v5) at (8,0) {};
  
  \node[agent] (a1) at (0,0) {$i$};
  \node[agent] (a2) at (4,0) {$j$};

  \draw[mydashed] (v4) -- (v5);
  \draw (v1) -- (v2);
  \draw (v2) -- (v3);
  \draw (v3) -- (v4);

  \draw[->, red, bend left=40] (v1) to (v2);
  \draw[->, red, bend left = 60] (v3) to (v4);
  \draw[->, black, mydashed, bend right=60] (v1) to (v3);
  \draw[->, black, mydashed, bend right=60] (v3) to (v5);
  \end{scope}

\draw[->,thickarrow, bend left = 30] (-0.5,1) to node[midway,xshift = 8pt, yshift=7pt, inner sep=2pt] {\makecell[c]{$j:\valid$}} (-0.5,3.5);
    \begin{scope}[yshift = -4.3cm]
  \node[vertex] (v1) at (0,0) {};
  \node[vertex] (v2) at (2,0) {};
  \node[vertex] (v3) at (4,0) {};
  \node[vertex] (v4) at (6,0) {};
  \node[vertex] (v5) at (8,0) {};
  
  \node[agent] (a1) at (0,0) {$i$};
  \node[agent] (a2) at (4,0) {$j$};

  \draw[mydashed] (v4) -- (v5);
  \draw (v1) -- (v2);
  \draw (v2) -- (v3);
  \draw (v3) -- (v4);

  \path[->, blue] (v1) edge[loop below, looseness=15, min distance=8mm] (v1);
  \path[->, mydashed, black] (v3) edge[loop above, looseness=15, min distance=8mm] (v3);
  \path[->, red] (v3) edge[loop below, looseness=15, min distance=8mm] (v3);
  \draw[->, mydashed, black, bend right=30] (v1) to (v5);
  
  \end{scope}
}

\end{scope}
  \node at (0,-9.5) {(b)};
\end{tikzpicture}
    \end{minipage}
    \caption{Two rules in \procname, when $r=1$. Black arrows represent the assignment of goals to each agent, red and blue arrows represent the next location of each agent (red if fixed, and blue if temporal). \emph{(a)} Before (left) and after (right) performing deadlock resolution. If a deadlock is found, we resolve it by rotating the assignment of targets. \emph{(b)} Target swapping. When an agent $j$ has already reached its goal (center), it swaps $\goal{i}$ and $\goal{j}$
    and calls {\procname} for $j$. When the call for $j$ succeeds (right), we retain the modified assignment; otherwise (left), we revert to the original one.
    }
    \label{fig_rules}
\end{figure}

\paragraph{Theoretical analysis.}\label{sec:proof}
The following properties suggest that \algname is a sound and computationally efficient configuration generator:
\begin{restatable}{lem}{correctconf}\label{lem: correctconf}
    If $\mathcal{Q}^{\text{from}}$ is distance-$r$ independent, then $\mathcal{Q}^{\text{to}}$ is reachable and a distance-$r$ independent configuration.
\end{restatable}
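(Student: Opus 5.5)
We prove the two properties separately. Reachability is immediate: every assignment $\qto{i}\leftarrow v$ in \procname picks $v$ from $\neicl{\qfrom{i}}$ (line \ref{pibt:loop_candidate}) or sets $\qto{i}\leftarrow\qfrom{i}$ at the end. We also need every agent to receive a value, and we need to rule out a ``temporal'' value being left stale. The top-level loop calls \procname for every $i$ with $\qto{i}=\bot$, and every call returns only after setting $\qto{i}$. So it remains to check that a tentative $\qto{i}=v$ that is abandoned is always overwritten, either by the next candidate or by the final stay. Deadlock resolution and target swaps change only $g$, never positions, so they are irrelevant to both properties.

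For distance-$r$ independence, we will establish the invariant that at every moment the set of agents with $\qto{\cdot}\neq\bot$ and \emph{fixed} positions is pairwise at distance more than $r$. We then argue about the moment a value becomes permanent. When $v$ is accepted for $i$ (line \ref{pibt:valid}), the check at line \ref{pibt:dist} guaranteed that no agent $j$ whose $\qto{j}$ was set at that time lies in $\neiclr{r}{v}$. Every agent whose $\qto{j}$ was set during the recursive calls of lines \ref{pibt:checknei}--\ref{pibt:checknei_end} is handled as follows. Such an agent either is one of the agents $j$ with $\qfrom{j}\in\neiclr{r}{v}$, for which the loop verified $\qto{j}\notin\neiclr{r}{v}$ (otherwise $f=\false$), or it was set deeper in the recursion. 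In the deeper case, its own call already performed the line-\ref{pibt:dist} test against $\qto{i}=v$, since $\qto{i}$ was tentatively set before recursing. Hence once $i$ returns \valid, no other to-position lies within distance $r$ of $v$. Later assignments to other agents also check line \ref{pibt:dist} against $v$, so the property persists.

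The main obstacle is the fallback $\qto{i}\leftarrow\qfrom{i}$, which is not tested by line \ref{pibt:dist}, together with the cancellation of a tentative choice after inner calls have already fixed other agents. For the fallback we will use the hypothesis that $\mathcal{Q}^{\text{from}}$ is distance-$r$ independent. Suppose some agent $j$ has $\qto{j}\in\neiclr{r}{\qfrom{i}}$. Then $j$ must have moved toward $i$'s current position. But when $j$ chose that vertex, $i$ was an agent with $\qfrom{i}\in\neiclr{r}{\qto{j}}$, so the loop at line \ref{pibt:existagent} would have recursed on $i$ (if $\qto{i}=\bot$) and demanded $\qto{i}\notin\neiclr{r}{\qto{j}}$. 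The other possibility is that $i\in S$, in which case line \ref{pibt:rotation} would have rejected the candidate. The delicate point is the case where $i$ is an ancestor in the recursion stack with a tentative value. Here we exploit the rotation check of line \ref{pibt:rotation}: descendants may not pick vertices within distance $r$ of $\qfrom{}$ of any agent in $S$. So if $i$ later falls back to $\qfrom{i}$, no descendant sits within $\neiclr{r}{\qfrom{i}}$.

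For cancelled tentative values, we will show by induction on recursion depth that an inner call fixing $\qto{j}$ is consistent with whatever $i$ ends up with. If $i$ picks a different $v'$, the recursion is re-run and the line-\ref{pibt:dist} test applies to $v'$. If $i$ stays, the argument above applies. We expect to formalize this as a lemma, proved by induction on the recursion, stating that after any call returns, all non-$\bot$ entries of $\mathcal{Q}^{\text{to}}$ are pairwise at distance greater than $r$.
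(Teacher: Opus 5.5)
Your argument is sound, but it rests on a different invariant from the paper's. The paper orders agents by the time their call to {\procname} returns. It then proves by induction on $k$ that the hybrid configuration $Q_k$ is distance-$r$ independent, where $Q_k$ places the first $k$ returned agents at their $\mathcal{Q}^{\text{to}}$-positions and all other agents at their $\mathcal{Q}^{\text{from}}$-positions. A not-yet-returned agent already sits at $\qfrom{k}$ in $Q_{k-1}$, so the fallback gives $Q_k=Q_{k-1}$ and needs no argument. All the work is in the {\valid} case, where one shows that $v$ avoids two sets of positions:
\begin{itemize}
\item the $\mathcal{Q}^{\text{to}}$-positions of earlier-returned agents, by line~\ref{pibt:dist} and the loop of lines~\ref{pibt:checknei}--\ref{pibt:checknei_end};
\item the $\mathcal{Q}^{\text{from}}$-positions of all later-returning agents. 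An uncalled one would have been recursed on at line~\ref{pibt:recursive}. A pending one is an ancestor, hence in $S$, and is excluded by line~\ref{pibt:rotation}.
\end{itemize}
You track only returned agents, so the fallback becomes the hard case. You must split on the status of $i$ when the conflicting agent $j$ chose its vertex: either $\bot$, so the loop fixed $i$ before its check, or on the stack, so line~\ref{pibt:rotation} applies. That split is the paper's argument about later agents' $\mathcal{Q}^{\text{from}}$-positions, read from the other side. The paper's stronger invariant buys a one-line {\invalid} case and uses the hypothesis on $\mathcal{Q}^{\text{from}}$ only in the base case. Yours makes explicit which check rules out each conflict, and it treats cancelled tentative choices directly rather than folding them into Observation~\ref{obs:pibtif}.

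One statement must be corrected. The lemma you plan to formalize at the end (``after any call returns, all non-$\bot$ entries of $\mathcal{Q}^{\text{to}}$ are pairwise at distance greater than $r$'') is false. Non-$\bot$ entries include tentative values of agents still on the recursion stack. For instance, take $r=1$ and let agent $a$ at $x$ tentatively take a neighbor $y$, then recurse on an agent $d$ at a leaf $z$ adjacent to $y$. Both of $d$'s candidates $y,z$ lie in $\neiclr{1}{y}$ and are rejected at line~\ref{pibt:dist}. So $d$ returns {\invalid} with $\qto{d}=z$, and at that moment $\qto{a}=y$ and $\qto{d}=z$ are adjacent. That conflict is exactly what the test $\qto{j}\in\neiclr{r}{v}$ in the loop catches before $a$ commits. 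The formal lemma has to be restricted to agents whose calls have returned, which is the invariant you stated at the outset; with that restriction your induction goes through.
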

\begin{restatable}{lem}{runtime}\label{lem: runtime}
    \algname outputs $\mathcal{Q}^{\text{to}}$ in $O(\Delta_r\Delta (r+1)n +\alpha)$ time, where $\alpha$ denotes the time for resolution of deadlock.
\end{restatable}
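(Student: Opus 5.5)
The bound follows from a charging argument modeled on the runtime analysis of (rotation-free) PIBT. The time spent in the deadlock-resolution loop of line~\ref{pibt:deadlock} is accounted for separately as $\alpha$. It remains to bound the total work of the top-level loop together with all recursive invocations of \procname.

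The first step is to show that \procname is invoked at most once per agent, i.e., at most $n$ calls in total. A call $\Call{\procname}{j,\cdot,g}$ is issued, either from the top level or from line~\ref{pibt:recursive}, only when $\qto{j}=\bot$. Inside the call, line~\ref{pibt:tmp} or the final line sets $\qto{j}\neq\bot$ immediately, before any further recursion. Moreover, no line ever resets $\qto{j}$ to $\bot$: a rejected candidate is overwritten by the next candidate or by $\qfrom{j}$, not cleared. So once an agent has been entered, it is never entered again. We also need that, when a candidate $v$ of agent $i$ is abandoned, the next locations already fixed for the descendants are left in place, exactly as in PIBT. Any reverted goal swaps (line~\ref{pibt:loop_e}) cost $O(1)$.

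The second step bounds the local work of a single call, excluding time spent inside its recursive subcalls.

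\begin{itemize}
\item Agent $i$ scans at most $|\neicl{\qfrom{i}}|\le\Delta+1$ candidates $v$; sorting them costs $O(\Delta\log\Delta)$, assuming distances are precomputed.
\item For each candidate, the tests in lines~\ref{pibt:dist} and~\ref{pibt:rotation} enumerate $\neiclr{r}{v}$, at most $\Delta_r$ vertices. Occupancy of $\mathcal{Q}^{\text{from}}$ and $\mathcal{Q}^{\text{to}}$ is maintained in a vertex-indexed array, and membership in the recursion list $S$ is kept as a per-agent flag. With this bookkeeping each vertex is checked in $O(1)$, so each test costs $O(\Delta_r)$.
\item The \textsc{SWAP} call is checked via a shortest-path step of length $r+1$ toward the target, which gives $O(r+1)$ per candidate, or $O((r+1)\Delta_r)$ if it must inspect $\neiclr{r}{\cdot}$ along the way.
\item The loop of lines~\ref{pibt:checknei}--\ref{pibt:checknei_end} again scans $\neiclr{r}{v}$ with $O(1)$ work per vertex, apart from the recursive calls themselves.
\end{itemize}

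Each candidate therefore costs $O((r+1)\Delta_r)$. Over all candidates, a single call costs $O(\Delta_r\Delta(r+1))$ locally. Summing over the at most $n$ calls gives $O(\Delta_r\Delta(r+1)n)$, and adding $\alpha$ yields the claimed bound.

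The main obstacle is making the first step watertight. The code temporarily assigns $\qto{i}$ and may abandon it after recursing, so I must argue that abandoning a candidate never requires re-entering an agent. My approach is to prove the invariant "every agent is entered at most once" by induction on the order of calls, using only that the guard $\qto{j}=\bot$ precedes every call and that $\bot$ is never restored. A secondary obstacle is pinning down where the $(r+1)$ factor originates, namely the \textsc{SWAP} and rotation checks along length-$(r+1)$ shortest-path prefixes, and confirming that it is charged once per candidate rather than once per neighbor in $\neiclr{r}{v}$.
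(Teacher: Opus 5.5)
Your proposal is correct and follows the paper's proof: at most one call of \procname per agent, and a local cost per call of $O(\Delta)$ candidates times $O(\Delta_r)$ neighborhood work plus the $r$-step $\nextvr{r}$ walk for \textsc{SWAP}. This cost is charged to the agent being processed, with deadlock resolution set aside as $\alpha$. Your single-entry argument via the $\qto{j}=\bot$ guard, and the per-agent flag for $S$, are more careful than the paper, which simply asserts exactly $n$ calls.

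The one loose end is the $O(\Delta\log\Delta)$ sorting term, which you list but never absorb. It is dominated when $r\ge 1$ because $\Delta_r\ge\Delta+1$. For $r=0$ a three-bucket sort suffices, since every candidate $v\in\neicl{\qfrom{i}}$ has $\dist(v,\goal{i})\in\{d-1,d,d+1\}$ with $d=\dist(\qfrom{i},\goal{i})$.
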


\begin{figure}[t]
\begin{minipage}[t]{0.25\columnwidth}
    \begin{tikzpicture}[scale = 0.5]
  \tikzset{vertex/.style={circle, draw, minimum size=8pt,inner sep=0pt }}
  \tikzset{agent/.style={circle, draw,fill = black!20,minimum size=8pt,inner sep=0pt }}
  \tikzset{>={Stealth[length=1mm,width=1mm]}}
  \tikzset{
  thickarrow/.style={
    line width=3pt,
    >={Stealth[length=2mm,width=5mm]}
  }
}
\tikzset{
  mydashed/.style={dash pattern=on 2pt off 1.5pt}
}
{\scriptsize
  \node[vertex] (v1) at (0,0) {};
  \node[vertex] (v2) at (-0.75,-0.75) {};
  \node[vertex] (v3) at (-1.5,-1.5) {};
  \node[vertex] (v4) at (-0.75,-2.25) {};
  \node[vertex] (v5) at (0,-3) {};
  \node[vertex] (v6) at (0.75,-2.25) {};
  \node[vertex] (v7) at (1.5,-1.5) {};
  \node[vertex] (v8) at (0.75,-0.75) {};
  \node[agent] (a1) at (0,0) {};
  \node[agent] (a2) at (-1.5,-1.5) {};
  \node[agent] (a3) at (0,-3) {};
  \node[agent] (a4) at (1.5,-1.5) {};
  
  \foreach\x in {1,2,3,4,5,6,7,8}{
  \pgfmathtruncatemacro{\y}{mod(\x, 8)+1};
    \draw (v\x) -- (v\y); 
  }
  \foreach\x in {1,3,5,7}{
  \pgfmathtruncatemacro{\y}{\x+1};
  \draw[->, black, bend right=60] (v\x) to (v\y);
  } 
}
\node at (-2,0) {(a)};
\end{tikzpicture}
    \end{minipage}
    \hspace{0.05\columnwidth}
    \begin{minipage}[t]{0.75\columnwidth}
        \begin{tikzpicture}[scale = 0.5]
  \tikzset{vertex/.style={circle, draw, minimum size=8pt,inner sep=0pt }}
  \tikzset{agent/.style={circle, draw,fill = black!20,minimum size=8pt,inner sep=0pt }}
  \tikzset{>={Stealth[length=1mm,width=1mm]}}
  \tikzset{
  thickarrow/.style={
    line width=3pt,
    >={Stealth[length=2mm,width=5mm]}
  }
}
\tikzset{
  mydashed/.style={dash pattern=on 2pt off 1.5pt}
}
{\scriptsize
  \begin{scope}[rotate = 270]
  \node[vertex] (v1) at (1,0) {};
  \node[vertex] (v2) at (2,0) {};
  \node[vertex] (v3) at (3,0) {};
  \node[vertex] (v4) at (4,0) {};
  \node[vertex] (u1) at (1,-1) {};
  \node[vertex] (u2) at (2,-1) {};
  \node[vertex] (u3) at (3,-1) {};
  \node[vertex] (u4) at (4,-1) {};
  
  \node[agent] (a1) at (2,0) {$i$};
  \node[agent] (a2) at (1,-1) {$j$};
  \foreach\x in {1,2,3}{
  \pgfmathtruncatemacro{\y}{\x+1};
    \draw (v\x) -- (v\y); 
    \draw (u\x) -- (u\y);
  }
  \foreach\x in {1,2,3,4}{
    \draw (v\x) -- (u\x); 
  }
  \path[->, black] (a1) edge[loop above, looseness=10, min distance=5mm] (v2);
  \draw[->, black, bend right=60] (a2) to (u3);
  \end{scope}
\begin{scope}[xshift = 3.5cm, rotate = 270]
    \node[vertex] (v1) at (1,0) {};
  \node[vertex] (v2) at (2,0) {};
  \node[vertex] (v3) at (3,0) {};
  \node[vertex] (v4) at (4,0) {};
  \node[vertex] (u1) at (1,-1) {};
  \node[vertex] (u2) at (2,-1) {};
  \node[vertex] (u3) at (3,-1) {};
  \node[vertex] (u4) at (4,-1) {};
  
  \node[agent] (a1) at (3,0) {$i$};
  \node[agent] (a2) at (2,-1) {$j$};
  \foreach\x in {1,2,3}{
  \pgfmathtruncatemacro{\y}{\x+1};
    \draw (v\x) -- (v\y); 
    \draw (u\x) -- (u\y);
  }
  \foreach\x in {1,2,3,4}{
    \draw (v\x) -- (u\x); 
  }
  \draw[->, black, bend right=60] (a1) to (v2);
  \draw[->, black, bend right=60] (a2) to (u3);
\end{scope}

\begin{scope}[xshift = 7cm, rotate = 270]
    \node[vertex] (v1) at (1,0) {};
  \node[vertex] (v2) at (2,0) {};
  \node[vertex] (v3) at (3,0) {};
  \node[vertex] (v4) at (4,0) {};
  \node[vertex] (u1) at (1,-1) {};
  \node[vertex] (u2) at (2,-1) {};
  \node[vertex] (u3) at (3,-1) {};
  \node[vertex] (u4) at (4,-1) {};
  
  \node[agent] (a1) at (4,0) {$i$};
  \node[agent] (a2) at (3,-1) {$j$};
  \foreach\x in {1,2,3}{
  \pgfmathtruncatemacro{\y}{\x+1};
    \draw (v\x) -- (v\y); 
    \draw (u\x) -- (u\y);
  }
  \foreach\x in {1,2,3,4}{
    \draw (v\x) -- (u\x); 
  }
  \draw[->, black, bend right=60] (a1) to (v2);
    \path[->, black] (a2) edge[loop below, looseness=10, min distance=5mm] (u3);
\end{scope}
}

  \draw[
    <->,
    line width=1.2pt,
    >=Stealth,
    bend left=25
  ] (0.5,-1) to (2,-1);
  \draw[
    <->,
    line width=1.2pt,
    >=Stealth,
    bend left=25
  ] (4,-1) to (5.5,-1);
\node at (-2.5,-1) {(b)};
\end{tikzpicture}
    \end{minipage}
    \caption{Adversarial instances for \algname when $r=1$. \emph{(a)}~One needs a rotation, unable to solve with \algname. The resulting configuration $\mathcal{Q}^{\text{to}}$ is equal to $\mathcal{Q}^{\text{from}}$. \emph{(b)}~Livelock. If \rev{$j$ is called first in \procname} $i$ is forced to move down. Otherwise, $j$ is forced to move up. 
    This process recurs infinitely, ensuring that $i$ and $j$ will never reach the goal simultaneously.
    }
    \label{fig_adv}
\end{figure}

\paragraph{Adversarial instances.}
Since \algname enforces rotation-free moves for agents, there are simple local structures where a configuration does not update (\Cref{fig_adv}a) or where a livelock occurs (\Cref{fig_adv}b) when rotation is required.
While {\algname} serves as a good configuration generator for $r=0$ (see \Cref{sec:exptswap}), it easily gets stuck for general $r$ due to livelocks; this motivates us to integrate \algname with a systematic search over configurations.

\subsection{\searchname}\label{sec:lacam}

{\lacam}~\cite{MAPF:lacam23} is a well-known complete \mapf solver that uses a configuration generator with \emph{constraints}, which include \emph{a subset of agents and their next locations} for creating a successor of the current one; by changing the constraints, it can generate different successors of the same configuration. 
\lacam progresses by updating a stack OPEN that stores the search nodes, each of which corresponds to one configuration $\mathcal{Q}$, and maintains search states as tuples $\langle \mathcal{Q}, \llbracket \mathcal{C} \rrbracket \rangle$ during the search, where $\mathcal{C}$ is a list of constraints.
If the generator produces a configuration identical to one already generated, it modifies the constraints to induce the generation of a different configuration. Eventually, it tries all possible constraints for every configuration, thereby generating all successors, yielding completeness.
It was originally developed for \emph{labeled} \mapf; however, we can also adapt it to the \emph{unlabeled} setting by including the target assignment $g$ in the search state. 
This leads to a complete algorithm by regarding each configuration as a vertex set, since a set of constraints can generate all neighboring configurations.
Its pseudocode is in the Appendix.
{
\renewcommand{\S}{\m{\mathcal{S}}}
\begin{algorithm}[t!]
\caption{livelock detection and target reassignment}
\label{algo:livelock}
\begin{algorithmic}[1]
\small
\Input{node $\mathcal{N}:\langle \mathcal{Q},\llbracket\mathcal{C}\rrbracket,g, \mathcal{B}\rangle$}\Comment{when generated known config $\mathcal{Q}$}
\Params{depth of detection $d \in \mathbb{N}_{\geq 0}$}
\State $\mathcal{N}^{\text{ans}} \leftarrow \textsf{parent}(\mathcal{N})$  \Comment{$\mathcal{N}^{\text{ans}} = \langle \mathcal{Q}^{\text{ans}}, [\mathcal{C}^{\text{ans}}], g^{\text{ans}},\mathcal{B}^{\text{ans}} \rangle$}
\While{$\mathcal{N}^{\text{ans}}\ne \bot$}
\If{$\mathcal{Q}=\mathcal{Q}^{\text{ans}} \land g = g^{\text{ans}}$}\label{livelock:visited}
\State $D \leftarrow \{i \mid \mathcal{Q}[i] = \mathcal{Q}^{\text{ans}}[i] \land \mathcal{Q}[i]\ne \goal{i}\}$
\State $\mathcal{B}'\leftarrow [\mathcal{B}^{\text{ans}}[i] \cup \{\goal{i} \mid i\in D\}]$\label{livelock:banned}
\State find an assignment $g'$ s.t. $g'(i)\notin \mathcal{B}'[i]$\label{livelock:rematch}
\If{succeeded} $\mathcal{N}^{\text{ans}} \leftarrow\langle \mathcal{Q},\epsilon,g', \mathcal{B}'\rangle$
\State push $\mathcal{N}^{\text{ans}}$ to stack $\text{OPEN}$; \Return \label{livelock:reset}
\EndIf
\EndIf
\State $\mathcal{N}^{\text{ans}} \leftarrow \textsf{parent}(\mathcal{N}^{\text{ans}})$; $d\leftarrow d-1$
\EndWhile
\end{algorithmic}
\end{algorithm}
}

\paragraph{Problem-specific adaptation.}
While \lacam-style search is applicable as above, a prior work~\cite{mapf:lagat25} shows that the livelock resolution scheme potentially accelerates the search.
Inspired by this observation, we introduce a similar technique into {\searchname} by reconstructing the target assignment.
A concrete strategy is shown in \Cref{algo:livelock}.
Each search node of \searchname additionally stores a list $\mathcal{B}$ of forbidden targets for each agent.
If a generated configuration is identical to one of the most recent $d$ ($=2$ in our experiment) configurations and the assignment $g$ is also identical, then we consider the search to have entered a livelock.
It then identifies agent $i$ that has not reached their targets $\goal{i}$ and has not moved, and adds $\goal{i}$ to the corresponding list $\mathcal{B}[i]$. If a new assignment $g'$ avoiding these targets exists, \searchname reinserts the node with $g'$ and resets constraints (line \ref{livelock:reset}).
Note that all of the modifications in this section do not break the completeness of \lacam, deriving the following:

\begin{restatable}{thm}{lacamcomplete}\label{thm:lacamcomplete}
    \searchname is complete for \riumapf.
\end{restatable}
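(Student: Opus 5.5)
We follow the standard completeness argument for \lacam and check that each unlabeled-specific modification preserves it. Completeness means two things: if a plan exists, \searchname returns one in finite time; if none exists, it reports failure in finite time.

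\textbf{Step 1: finite search space.} We first bound the space of search states. A configuration, viewed as a vertex set, is one of at most $\binom{|V|}{n}$ distance-$r$ independent subsets. The assignment $g$ ranges over the finitely many bijections $A\to T$. The forbidden lists $\mathcal{B}$ only grow, because \Cref{algo:livelock} line~\ref{livelock:banned} takes a union, and each $\mathcal{B}[i]\subseteq T$. Hence a list can be enlarged at most $n\cdot n$ times along any chain of reinsertions. Since a node is reinserted only when $\mathcal{B}'$ is strictly larger, or when $g'$ is a new assignment respecting $\mathcal{B}'$, the number of livelock-induced reinsertions is finite.

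\textbf{Step 2: exhaustive successor generation.} By \Cref{lem: correctconf}, every configuration produced by \algname is reachable and distance-$r$ independent, so every generated transition is valid. Next we argue, as in \lacam, that the constraint tree attached to a node eventually fixes the next location of every agent. Consider any valid successor set $\mathcal{Q}'$ of $\mathcal{Q}$. There is a labelling of the agents in $\mathcal{Q}'$ that is consistent with the reachability condition. A full constraint prescribing that labelling forces \algname to output $\mathcal{Q}'$. This works because constrained agents are never re-planned, and deadlock rotation and target swapping alter only $g$, never constrained locations. Thus, once the constraints of a node are exhausted, every neighbouring configuration of $\mathcal{Q}$ has been generated, independent of the current $g$.

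\textbf{Step 3: livelock detection does not lose states.} The main obstacle is the reinsertion in line~\ref{livelock:reset}, which resets the constraints to $\epsilon$ and could in principle cause infinite cycling. Such a node $\langle\mathcal{Q},\epsilon,g',\mathcal{B}'\rangle$ is pushed only for a configuration that has already been generated. Two facts control this. First, the original node for $\mathcal{Q}$ remains in \explored with its own constraint tree, and that tree still guarantees generation of all successors by Step~2. So the reset merely adds an extra visit and does not replace the exhaustive one. Second, by Step~1 the extra visits per configuration are bounded. Hence the search still ends up expanding every successor of every reached configuration.

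\textbf{Step 4: conclusion.} Steps 2 and 3 together mean that \searchname performs a finite graph search which eventually visits every configuration reachable from $S$. If $T$ is reachable, the goal check fires and backtracking through parent pointers yields a valid plan. If $T$ is not reachable, OPEN eventually empties and the algorithm correctly reports that no plan exists. The hard step is Step~3, where we must make precise that reinsertion strictly increases $(\mathcal{B},g)$ in a well-founded order. We would handle it by showing that after a repeated $(\mathcal{Q},g)$ the set $D$ is nonempty, or else that no reassignment occurs. Either way, each reinsertion strictly enlarges $\mathcal{B}$ or exhausts the finitely many choices of $g'$.
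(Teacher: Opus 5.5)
Your overall route is the paper's: bound the number of livelock reinsertions, note that a constraint fixing every agent's next vertex makes \algname output any prescribed neighbouring configuration, and conclude from finiteness of the configuration space. The gap is the bound on reinsertions. You flag it as the hard step but do not establish it.

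In Step~1 you say a node is reinserted only when $\mathcal{B}'$ is strictly larger or $g'$ is a new assignment. The algorithm checks neither condition: it reinserts whenever some $g'$ with $g'(i)\notin\mathcal{B}'[i]$ exists. So monotonicity of $\mathcal{B}$ alone gives no bound. Your Step~4 fallback also fails. If $D$ were empty, then $\mathcal{B}'=\mathcal{B}^{\text{ans}}$ and $g$ itself avoids $\mathcal{B}'$. A reassignment \emph{would} then succeed, and the node would be reinserted with no progress in any measure. Moreover, previously used assignments are recorded nowhere, so ``exhausting the finitely many choices of $g'$'' is not a valid termination measure.

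The missing argument is the paper's, in two parts.
\begin{itemize}
\item[(a)] $D\neq\emptyset$. With $\mathcal{Q}=\mathcal{Q}^{\text{ans}}$ compared as lists, $D=\{i\mid \mathcal{Q}[i]\neq g(i)\}$, which is empty only if $\mathcal{Q}=T$. A known configuration equal to $T$ would already have triggered the goal check, so the search would have terminated.
\item[(b)] Strict growth. Every node's stored assignment avoids its own banned lists: fresh nodes carry $\mathcal{B}^{\text{init}}=\emptyset$, and reinserted nodes receive a $g'$ chosen to avoid $\mathcal{B}'$. Since $g=g^{\text{ans}}$, each $i\in D$ has $g(i)\notin\mathcal{B}^{\text{ans}}[i]$. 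Hence $\sum_i|\mathcal{B}'[i]|>\sum_i|\mathcal{B}^{\text{ans}}[i]|$, and this sum is at most $n^2$.
\end{itemize}

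In Step~3, do not rely on the original node's constraint tree surviving the reinsertion. The paper describes the operation as reinserting the node and \emph{resetting} its constraints. The robust argument is the paper's: resets are bounded as above, and the set of possible constraints per node is finite (at most $2^{n}(\Delta+1)^{n}$). So after the last reset the enumeration runs to completion, and every neighbour is generated, including the fully constrained one from your Step~2. With these two fixes, your Steps~2 and~4 go through as in the paper.
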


\section{Empirical Evaluation}
\begin{figure*}[t]
    \centering
    \input{draft/tikz/fig_expmain}
    \caption{Time and makespan across three maps. For larger $r$, we cannot secure a sufficient number of distance-$r$ independent sets
    for extremely dense settings; thus, the upper bound on $n$ differs among $r$, and the success rate never falls below 1.0 in some settings. We also report the 95\% CI of the makespan by filled area, for the case where the success rate exceeds 0.1.
    Note that a denser scenario can have a lower makespan than a sparser one in the unlabeled setting.}
    \label{fig:expmain}
\end{figure*}

We evaluate both \ilp- and generator-based methods.
These are coded in Python, and the experiments were run on a Mini PC with Intel Core i9-13900H \SI{2.6}{\giga\hertz} CPU and \SI{32}{\giga\byte} RAM.
For our evaluation, we use \emph{empty-16-16}, \emph{random-64-64-20}, \emph{lak303d}, and \emph{warehouse-10-20-10-2-2} in \mapf benchmarks~\cite{mapf:stern19}.
Since the scenario in \mapf benchmarks does not apply to \riumapf, we generate instances by sampling two random distance-$r$ independent sets ($S$ and $T$), for each map, distance $r$, and number of agents $n$.

\paragraph{Baselines.}
\emph{Conflict-based search (CBS)}~\cite{mapf:CBS} is a promising baseline for our problem, as it has been applied to several \mapf extensions~\cite{lamapf:Li19,drone:honig17}; we tested an adaptation of CBS by converting an unlabeled instance to the labeled one with the Hungarian target assignment.
We also tested a \emph{greedy} variant of CBS~\cite{GCBS/BarerSSF14}, which is used for a suboptimal baseline.
Furthermore, as a generator-based baseline, we tested a solver that successively applies \algname.

\subsection{Evaluation of Exact \ilp algorithms}\label{sec:ILPexperiment}
\begin{table}[t!]
    \centering
\setlength{\tabcolsep}{2pt}
\renewcommand{\arraystretch}{1}
\begin{adjustbox}{max width=\textwidth}
{\small
\begin{tabular}{crrrrrr!{\hspace{0.5em}}rrr}
\toprule
&&& \multicolumn{2}{c}{\ilp} &\multicolumn{2}{c}{Compression  }&\multicolumn{3}{c}{\searchname}\\
\cmidrule{4-10}
  Map & $n$ & $r$ & rate & time (\SI{}{\second}) & rate & time & rate & time & subopt \\
\midrule
  \makecell[c]{{\scriptsize\textit{\makecell[c]{empty-\\16-16}}}\\ \adjustbox{raise=-5mm}{\includegraphics[width = 0.1\linewidth]{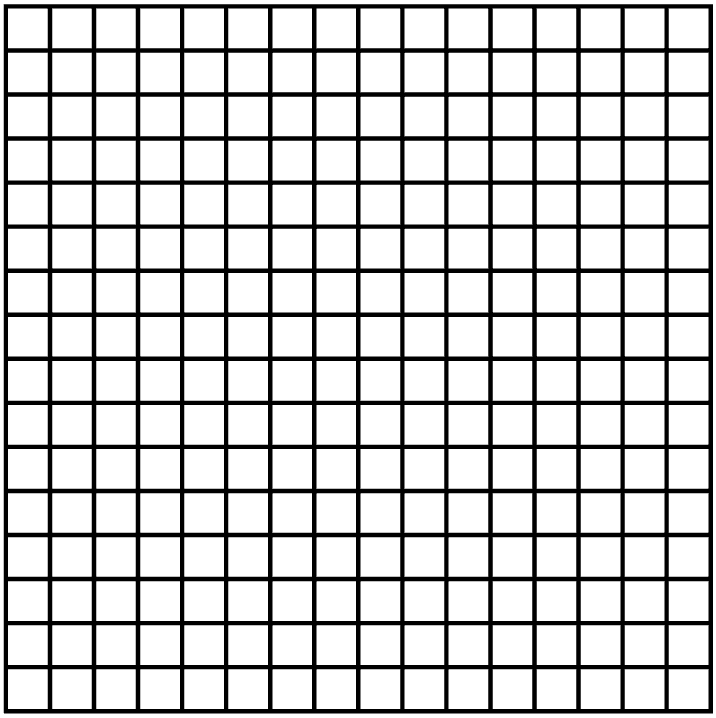}}} 
& \makecell[r]{10\\\\20\\\\30\\}& \makecell[r]{1\\2\\1\\2\\1\\2} &\makecell[c]{\textbf{1.00}\\\textbf{1.00}\\\textbf{1.00}\\\textbf{1.00}\\\textbf{1.00}\\\textbf{1.00}}&\makecell[c]{0.48\\0.80\\0.41\\0.59\\0.39\\0.53}&\makecell[c]{\textbf{1.00}\\\textbf{1.00}\\\textbf{1.00}\\\textbf{1.00}\\\textbf{1.00}\\\textbf{1.00}}&
\makecell[r]{0.33\\0.69\\0.26\\0.48\\0.23\\0.41}&
\makecell[c]{\textbf{1.00}\\\textbf{1.00}\\\textbf{1.00}\\\textbf{1.00}\\\textbf{1.00}\\\textbf{1.00}}&
\makecell[c]{0.030\\0.032\\0.037\\0.046\\0.045\\0.043}&\makecell[c]{1.33\\1.42\\1.83\\1.57\\2.12\\1.95}\\
\midrule
  \makecell[c]{{\scriptsize\textit{\makecell[c]{random-\\64-64-20}}}\\\adjustbox{raise=-5mm}{\includegraphics[width = 0.1\linewidth]{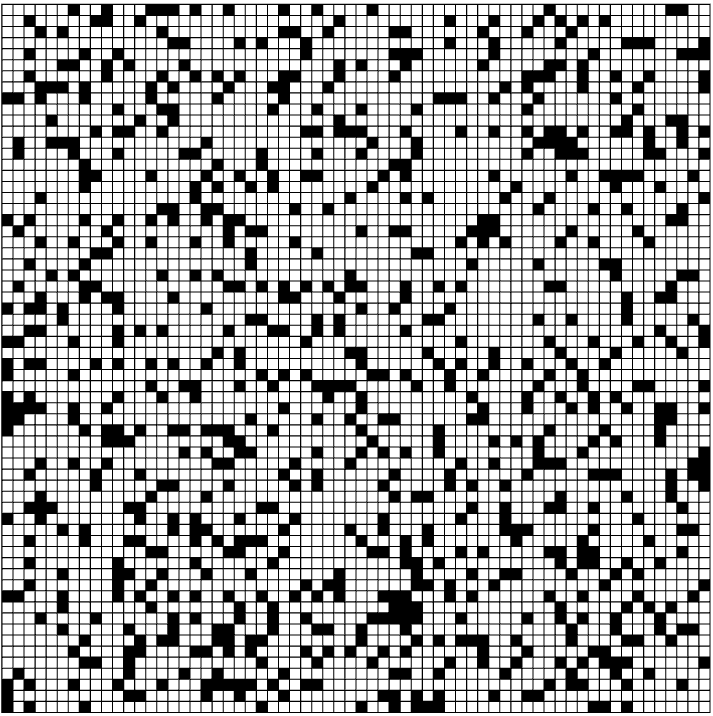}}} 
& \makecell[c]{10\\\\20\\\\30\\}& \makecell[r]{1\\2\\1\\2\\1\\2}&\makecell[c]{0.86\\0.74\\0.90\\0.56\\0.84\\0.76}  &
\makecell[c]{41.6\\44.7\\41.6\\45.8\\40.4\\42.6} &\makecell[c]{\textbf{1.00}\\0.82\\0.92\\0.66\\0.90\\0.82} &
\makecell[r]{0.056\\23.8\\20.0\\34.6\\21.1\\29.6}&
\makecell[c]{\textbf{1.00}\\\textbf{1.00}\\\textbf{1.00}\\\textbf{1.00}\\\textbf{1.00}\\\textbf{1.00}}&\makecell[c]{0.091\\0.088\\0.15\\0.16\\0.21\\0.22}&
\makecell[c]{1.33\\1.30\\1.41\\1.31\\1.61\\1.74}\\
\bottomrule
\end{tabular}
}
\end{adjustbox}
\caption{
Evaluation of the \ilp algorithms. We prepared 50 random instances for each map and $r\in \{1,2\}$, $n\in \{10,20,30\}$, and report the \emph{rate} of instances solved within \SI{60}{\second} and the average running \emph{time} (\si{\second}) over those instances. We also report rate, time, and suboptimality over solved instances (\emph{subopt}) of \searchname.}\label{tab:ILP}
\end{table}
\Cref{tab:ILP} shows both the effectiveness and limitations of \ilp approach, which uses Gurobi as the \ilp solver.
We observe that, on a simple map (\emph{empty-16-16}), \ilp can find an optimal solution within a second, while on a large-scale instance (\emph{random-64-64-20}), it often fails to find a solution, and the number of failed instances increases as $n$ increases.
On large maps, the compression can partially mitigate this issue (see e.g., \emph{random-64-64-20} with $n=10$).
However, this improvement occurs only in settings where $n/ |V|$ is extremely small.
These results indicate that the scalability of the \ilp approach is primarily limited by the growth in the number of variables, and it is difficult to resolve by compression.
In settings where kernelization does not effectively reduce the instance size, \searchname overcomes this scalability barrier.
Indeed, \searchname achieves at least a $500\%$ speedup on \textit{empty-16-16}, and a speedup by a factor of roughly $100$ on \textit{random-64-64-20} (e.g., $n=30$ and $r=1$) with acceptable suboptimality, despite the \PSPACE-completeness of \iumapf.

\subsection{Large-scale Problem of $r\ge 1$}
We evaluate the scalability of the configuration generator-based approach \searchname, and the effect of livelock resolution on large instances.
Since our \ilp-based algorithm is not scalable to a large-scale setting, we exclude \ilp from these experiments.
We generate random instances on the large maps \emph{random-64-64-20}, \emph{lak303d}, and \emph{warehouse-10-20-10-2-2} for $r\in [1,3]$, and analyzed the fraction solved within \SI{60}{\second} (\emph{success rate}) and the length of plan (\emph{makespan}).
\Cref{fig:expmain} shows the empirical result for each setting.
The main observations are as follows:
\begin{itemize}
    \item Although \algname works in sparse settings to find a solution, \algname alone rarely finds a solution in moderately dense situations.
    This suggests that \algname often gets stuck due to livelocks.
    \item \lacam-style search substantially increases the success rate in dense settings, indicating that the search scheme efficiently handles long-horizon tasks, such as livelocks.
    It also finds solutions for larger instances, including those beyond the practical reach of the CBS-based baselines.
    \item By explicitly handling livelocks, \searchname further increases the success rate. 
    This suggests that leveraging the anonymity of agents enables us to plan faster.
    However, this yields no clear benefit on maps with narrow corridors (e.g., \emph{warehouse-10-20-10-2-2} and $r{=}1$). 
    \item Related to the above point, for settings differing only in $r$ (e.g., \emph{random-64-64-20} with $n=200$), the setting with smaller $r$ exhibits a lower success rate. For small $r$, multiple agents can enter narrow corridors simultaneously, making search-stuck situations more likely.
In contrast, a large $r$ implicitly prevents such situations by enforcing distance-$r$ independence.    
\end{itemize}

\subsection{Special case $r=0$: \emph{Unlabeled} \mapf}\label{sec:exptswap}
Recall that \riumapf is equivalent to \emph{unlabeled} \mapf~when $r=0$.
Thus, we evaluate the performance of \algname against an existing configuration generator specialized for \emph{unlabeled} \mapf, namely
\textsc{TSWAP}~\cite{mapf:tswap}.\footnote{The code is from \url{https://github.com/Kei18/pytswap.git}.}
Both algorithms require an initial target assignment function, and we use the Hungarian method for this purpose.
Note that both algorithms always find solutions in these settings.
We use the time for one-step generation and the suboptimality of the plan, defined as the plan length over a trivial lower bound derived by \textsc{Bottleneck Matching}~\cite{bottleneck}, which is collision-agnostic. 
\begin{figure}[t]
    \input{draft/tikz/fig_exptswap}
    \caption{Average time per one-step configuration generation and suboptimality of TSWAP and \algname. 
    }
    \label{fig:expTSWAP}
\end{figure}

\Cref{fig:expTSWAP} shows that \algname consistently achieves a lower suboptimality than TSWAP, while the running time is substantially higher than TSWAP; as \algname allows agents to move more flexibly than TSWAP and recursively tries more candidates to move for every agent.
However, the actual running time is dominated by the one for target assignment, which takes $O(n^3)$; both algorithms take linear time per configuration generation.

\section{Conclusion}
We study \riumapf, a variant of \emph{unlabeled} \mapf that introduces an extended collision definition by distance.
Given the absence of algorithms readily available for this generalization problem, we tackle \riumapf with two approaches from the view of quality and scalability: reduction-based algorithms with compression, and configuration generator-based search.
A direction for future work is to develop a search scheme for the \emph{unlabeled} variant, efficiently handling the anonymity of agents; leave as an open question.

\section*{Acknowledgments}
This research was partially supported by JSPS KAKENHI Grant Number 25K21289 and JST PRESTO (JPMJPR2513).

\bibliography{ref-macro, AAAI/aaai2027}

\newpage
\appendix

\section*{Appendix}
\crefalias{section}{appendix}
\setcounter{secnumdepth}{2}
\section{Omitted discussions in \Cref{sec:FPT}}\label{sec:omittedFPTproofs}

\paragraph{\textsc{Token Sliding}} is a problem that asks whether one can reach a target token placement from a given initial one by moving one token to an adjacent vertex, while maintaining the placement as an independent set~\cite{reconf:hearn05}.
Formally, the problem asks whether there exists a sequence of independent sets $[S=S_0,S_1,\dots,S_\ell=T]$ such that $S_{i-1}$ and $S_i$ are adjacent for every $i\in[1,\ell]$ under the following rule.
In \textsc{Token Sliding}, for two independent sets $I_1$ and $I_2$, we say that $I_1$ and $I_2$ are adjacent if there exist vertices $u,v\in V(G)$ such that $I_1\setminus I_2=\{u\}$, $I_2\setminus I_1=\{v\}$, and $uv\in E(G)$.
This problem is similar to $1$\iumapf in the sense that we can view agents as tokens and allow at most one token to move to an adjacent vertex in each step.
Indeed, some recent works have progressed on settings motivated by \mapf~\cite{reconf:Kristan25}.

\subsection{Proofs}

We begin by proving that the reduction rules in \Cref{sec:FPT} are safe.
Here, for a graph $G = (V,E)$, we define the \emph{open neighborhood} $\nei{v}$ of $v\in V$ as $\nei{v}=\{w \in V\mid vw\in E\}$, and $\nei{V'}$ for $V'\subseteq V$ as $\neicl{V'}\setminus V'$ for the sake of simplicity.
\FPTcorrect* 

To prove~\Cref{lem: FPTcorrect}, we first present the following lemma.

\begin{lem}\label{FPT:reducebefore}
    Let $G=(V,E)$ be a graph and let $G'=(V',E')$ be the graph obtained by applying either Reduction Rule~\ref{red: adj} or Reduction Rule~\ref{red: BFS} once, which replaces a vertex subset $W$ with a single black hole $b$. If there is a plan from $S$ to $T$ for $G$, then there is a plan from $S$ to $T$ for $G'$.
\end{lem}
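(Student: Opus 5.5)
The direction is "a plan in $G$ gives a plan in $G'$", and I will prove it by projecting configurations. Let $\phi\colon V\to V'$ be the contraction map: $\phi(w)=b$ for $w\in W$ and $\phi(w)=w$ otherwise. Given a plan $\Pi=[\conf{0},\dots,\conf{t^\ast}]$ in $G$, define $\conf{k}'[i]=\phi(\conf{k}[i])$ for every agent $i$ and step $k$. I will show that $\Pi'=[\conf{0}',\dots,\conf{t^\ast}']$ is a plan in $G'$ from $S$ to $T$. There are three things to check: the endpoints, reachability, and galactic independence.

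\emph{Endpoints.} For Reduction Rule~\ref{red: adj}, $W$ consists of two black holes, and $S\cup T\subseteq P$, so $\phi$ fixes $S$ and $T$. For Reduction Rule~\ref{red: BFS}, $W=C$ is a component of $G_p=G[P\setminus \neicl{S\cup T}]$, hence disjoint from $S\cup T$, and again $\phi$ fixes both sets. It remains to check that $b\notin\neicl{S\cup T}$ in $G'$, so that the new instance is well-formed. In Reduction Rule~\ref{red: BFS}, $b$ is joined only to $\neicl{C}\setminus C$. Since $C$ avoids $\neicl{S\cup T}$, no vertex of $S\cup T$ is adjacent to $C$, so none lies in $\neicl{C}\setminus C$. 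For Reduction Rule~\ref{red: adj}, the same holds because neither original black hole was adjacent to $S\cup T$.

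\emph{Reachability.} If $\conf{k}[i]=\conf{k-1}[i]$, then the projected positions are equal as well. If the agent moves along an edge $xy$ of $G$, there are three cases. If both endpoints are in $W$, both map to $b$, which is a stay move. If exactly one endpoint is in $W$, the other lies in $\neicl{W}\setminus W$, and by construction it is adjacent to $b$ in $G'$ (for Reduction Rule~\ref{red: adj}, $b$ inherits the edges of both merged vertices). If neither endpoint is in $W$, the edge $xy$ is kept in $G'$.

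\emph{Galactic independence.} Black holes impose no constraint, so only agents on planets of $G'$ matter. The planets of $G'$ are $P\setminus W$: the rule only turns planets into black holes and never the reverse. Fix two agents on planets $u,v\in P\setminus W$ at some step. Then $G'[P']$ is an induced subgraph of $G[P]$, so $\dist_{P'}(u,v)\ge \dist_P(u,v)\ge r+1$ by independence in $G$. In the $r=1$ setting this is just non-adjacency in $G[P]$, which is preserved in $G[P\setminus W]$. Multiple agents landing on $b$ are allowed, since a black hole holds up to $n$ agents.

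The step I expect to need the most care is the case analysis for reachability. There I must make sure that the edges added to $b$ cover every edge leaving $W$ in $G$, so that no projected move is lost. This is exactly what the rule's definition (connecting $b$ to all of $\neicl{C}\setminus C$) guarantees.
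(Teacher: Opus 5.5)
Your proposal is correct and follows the paper's proof. You project each configuration by the contraction map, check reachability by the same case split on whether the endpoints of a move lie in $W$, and observe that agents on planets outside $W$ keep their positions, so galactic independence carries over. Your additional checks (that $S$ and $T$ are fixed, that $b\notin N[S\cup T]$, and the induced-subgraph distance argument for general $r$) are sound details that the paper leaves implicit.
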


\begin{proof}
    Let $\Pi = [\conf{0}, \conf{1},...,\conf{\ell}]$ be a plan for $G$ from $S$ to $T$.
    Consider the sequence $\Pi' = [\conf{0}', \conf{1}',...,\conf{\ell}']$ such that:
    \begin{align*}
    \conf{k}'[i]=
    \begin{cases}
        b & \text{if}~ \conf{k}[i]\in W\\
        \conf{k}[i] & \text{otherwise.}
    \end{cases}
\end{align*} 
We now show that $\Pi'$ is a plan for $G'$ from $S$ to $T$.
First, we claim the reachability of each configuration.
For a step $k\in [0,\ell-1]$, consider an arbitrary agent $i$.
If $\conf{k}[i]\notin W$ and $\conf{k+1}[i]\notin W$ hold, then $\conf{k}'[i] = \conf{k}[i]$ and $\conf{k+1}'[i] = \conf{k+1}[i]$.
Since $\Pi$ is a plan for $G$, we have $\conf{k+1}'[i] = \conf{k+1}[i] \in \neicl{\conf{k}[i]} = \neicl{\conf{k}'[i]}$.
Otherwise, $\conf{k}[i]\in W$ or $\conf{k+1}[i]\in W$ holds.
Without loss of generality, assume that $\conf{k}[i]\in W$.
Then there are two cases: $\conf{k+1}[i]\in W$ and $\conf{k+1}[i]\in \nei{W}$.
The former case is straightforward because $\conf{k}'[i] = \conf{k+1}'[i] = b$.
In the latter case, we have $\conf{k+1}'[i] = \conf{k+1}[i]$.
Since $\conf{k}'[i] = b$ is adjacent to the vertices in $\nei{W}$ by definition, we have $\conf{k+1}'[i] = \conf{k+1}[i] \in \nei{W} =  \nei{b} \in \neicl{\conf{k}'[i]}$. 
Therefore, $\conf{k+1}'$ is a reachable configuration.
This claim holds for every $k \in [0,\ell-1]$.

Moreover, no pair of two agents $i,j$ on $P\setminus W$ is adjacent on $G$, where $P$ is a set of planets in $G$.
Since $\conf{k}[i] = \conf{k}'[i]$ for every agent $i$ with  $\conf{k}[i] \notin W$, the configuration $\conf{k}'$ is also galactic independent.
Therefore, $\Pi'$ is a plan for $G'$, completing the proof.
\end{proof}

Thus, we can transform any plan for $G$ into a plan for $G'$ in both rules.
We next show the converse, i.e., we can transform any plan for $G'$ into a plan for $G$, which completes the proof of \Cref{lem: FPTcorrect}.
We begin with the correctness of \Cref{red: adj}.
As a preparation, for a vertex $b$, we denote by $\project_t$ and $\absorb_t$ the sets of agents that leave $b$ and enter $b$ at step $t$, respectively, that is, $\project_t = \{ i \in A \mid \conf{t-1}'[i]=b \wedge \conf{t}'[i]\neq b \}$ and 
$\absorb_t = \{ i \in A \mid \conf{t}'[i]=b \wedge \conf{t-1}'[i]\neq b \}$.

\begin{lem}\label{lem:adjsafe}
    \Cref{red: adj} is safe.
\end{lem}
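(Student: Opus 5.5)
By \Cref{FPT:reducebefore}, any plan for the graph $G$ before the reduction yields a plan for the contracted graph $G'$. It therefore remains to show the converse: a plan for $G'$, in which two adjacent black holes $u,v$ are merged into one black hole $b$, can be lifted to a plan for $G$.

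\emph{Setup.} Let $\Pi'=[\conf{0}',\dots,\conf{\ell}']$ be a plan in $G'$. We build $\Pi$ in $G$ by copying every agent that is not at $b$ unchanged. For agents at $b$ we must choose whether they sit at $u$ or at $v$. Black holes have capacity $n$ and impose no independence constraint, and $u,v\notin P$. Hence galactic independence of each lifted configuration is inherited directly from $\conf{k}'$, since the positions on $P$ are identical. The only issue is reachability.

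\emph{Per-step rule.} At step $t$, consider the agents in $\absorb_t$ (entering $b$ from a vertex $w\in\nei{b}=\nei{u}\cup\nei{v}\setminus\{u,v\}$). Place each such agent at $u$ if $w\in\nei{u}$, and at $v$ otherwise. For agents in $\project_t$ (leaving $b$ toward $w$ at step $t$), we need them to be at an endpoint adjacent to $w$ at step $t-1$.

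\emph{Main obstacle.} The agents of $\project_t$ may have sat at the wrong endpoint at time $t-1$. A single step cannot move an agent from $u$ to $v$ and then out to $w$, because that takes two moves. The fix I would use is time expansion. Between consecutive configurations of $\Pi'$, insert one intermediate configuration in which every agent that stays at $b$, or is about to leave $b$ toward a neighbor of the other endpoint, first crosses the edge $uv$ as needed. All other agents wait during this intermediate step. Waiting is allowed since $\conf{k}[i]\in\neicl{\conf{k}[i]}$.

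Concretely, for each $t\in[1,\ell]$ I would insert the intermediate configuration $\widehat{\mathcal{Q}}_t$. In $\widehat{\mathcal{Q}}_t$, each agent $i\in\project_t$ is moved within $\{u,v\}$ to an endpoint adjacent to $\conf{t}'[i]$; such an endpoint exists because $\conf{t}'[i]\in\nei{b}$. Every other agent keeps its position from the lifted $\conf{t-1}$.

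\emph{Verification.} Each transition is then a move along $uv$, a wait, or a move between an endpoint and one of its own neighbors, so reachability holds. Every inserted configuration agrees with $\conf{t-1}'$ on $P$, so it is galactic independent. Finally, $S,T\subseteq P$ are non-adjacent to black holes, so $\Pi$ starts at $S$ and ends at $T$. The only subtle step is choosing the endpoint assignment at time $t-1$ consistently, which the inserted buffer step resolves. Plan length is not a concern, since safety is only about feasibility.
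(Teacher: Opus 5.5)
Your proposal is correct and follows essentially the same route as the paper: lift the $G'$ plan by copying every move that does not touch $b$, keep each agent inside $b$ at a fixed endpoint $u$ or $v$, and insert an extra step that moves it across the edge $uv$ when it must leave $b$ toward a neighbor of the other endpoint. The differences are only presentational. The paper phrases this as an induction on prefixes of $\Pi'$ via ``compatible'' plans and inserts the extra step only when it is needed, whereas you insert a uniform buffer configuration $\widehat{\mathcal{Q}}_t$ at every step; you should also state explicitly that agents remaining in $b$ keep their endpoint in the lifted $\conf{t}$.
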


\begin{proof}
    Let $G$ be the initial graph and $G'$ be a graph obtained after applying \Cref{red: adj} to adjacent black holes $u,v$ of $G$.
    By \Cref{FPT:reducebefore}, if there is a plan from $S$ to $T$ for $G$, then there is a plan for $G'$.
    
    Suppose that there is a plan $\Pi' = [\conf{0}', \conf{1}',...,\conf{k}']$ for $G'$ from $\conf{0}' = S$ to $\conf{k}'$.
    Note that $\conf{k}'$ is not necessarily equal to $T$.
    For an integer $\ell$, a plan $\Pi = [\conf{0}, \conf{1},...,\conf{\ell}]$ for $G$ is said to be \emph{compatible with} $\Pi'$ if $\conf{0} = \conf{0}'$, $\conf{\ell}[i] = \conf{k}'[i]$ for each agent $i$ with $\conf{k}'[i] \neq b$, and $\conf{\ell}[i] \in \{ u, v\}$ for each agent $i$ with $\conf{k}'[i] = b$.
    By induction on $k$, we show that there is a plan $\Pi$ for $G$ compatible with $\Pi'$.
    
    The base case $k = 0$ is straightforward.
    Consider the inductive case $k > 0$.
    Let $\Phi' = [\conf{0}', \conf{1}',...,\conf{k-1}']$.
    Since $\Phi'$ is a plan from $\conf{0}'$ to $\conf{k-1}'$, there is a plan $\Phi = [\conf{0}, \conf{1},...,\conf{\ell}]$ for $G$ compatible with $\Phi'$, by the induction hypothesis.
    We extend this plan to construct a plan $\Pi$ for $G$ compatible with $\Pi'$. 
    
    Consider agent $i \in A$.
    It is clear that if $\conf{k-1}'[i]\ne b$ and $\conf{k}'[i]\ne b$ hold, then the same move can be performed in $G$, that is, define $\conf{\ell+1}[i] = \conf{k}'[i]$.
    Since $\conf{\ell}[i] = \conf{k-1}'[i] $, this move is valid. 
    Otherwise, $\conf{k-1}'[i]= b$ or $\conf{k}'[i]= b$ holds.
    If $\conf{k-1}'[i]= b$ and $\conf{k}'[i]= b$, it suffices to specify that $\conf{\ell+1}[i]=\conf{\ell}[i]$, which force $i$ to stay the former location.
    
    Now we explain the case where one of $\conf{k-1}'[i]$ and $\conf{k}'[i]$ is $b$, and the other is a neighbor of $b$.
    By symmetry, suppose that $\conf{k-1}'[i] = b$ and $\conf{k}'[i] \in \nei{b}$.
    Recall that $u$ and $v$ are contracted to $b$, and hence $\nei{b} = \nei{u} \cup \nei{v} \setminus \{u,v\}$.
    Suppose that $\conf{\ell}[i] = u$.
    If $\conf{k}'[i] \in \nei{u}$, agent $i$ can move to $\conf{k}'[i]$, that is, define $\conf{\ell+1}[i] = \conf{k}'[i]$.
    If $\conf{k}'[i] \in \nei{v}$, agent $i$ moves to $v$ first, and then move to $\conf{k}'[i]$ at the next step.
    In other words, $\conf{\ell+1}[i] = v$ and $\conf{\ell+2}[i] = \conf{k}'[i]$.
    Note that the agents not located at $u$ or $v$ at step $\ell+2$ remain at their current positions.
    The case where $\conf{\ell}[i] = v$ is analogous.
    It is obvious that $\Pi =  [\conf{0}, \conf{1},...,\conf{\ell},\conf{\ell+1},\conf{\ell+2}]$ is compatible with $\Pi'$.
    Moreover, since the agents on planets $P$ at steps $\ell+1$ and $\ell+2$ follow $\conf{k}'[i]$, we conclude that $\conf{\ell+1}$ and $\conf{\ell+2}$ are galactic independent.
\end{proof}

Next, we discuss the correctness of \Cref{red: BFS}.
To this end, we give two lemmas that impose a constraint on plans in $G'$.

\begin{lem}\label{lem:nextempty}
    Let $\Pi = [\conf{0},\conf{1},...,\conf{\ell}]$ be a plan from $S$ to $T$ for $G'$. 
    Then, $\Pi$ can be transformed to a plan $\Pi'= [\conf{0}',\conf{1}',...,\conf{\ell'}']$ for $G$ such that no two consecutive configurations have an agent in the neighborhood of $b$.
\end{lem}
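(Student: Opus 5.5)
The plan is to modify a given plan $\Pi$ for $G'$ in two stages. First I would normalize it so that agents touch $\nei{b}$ only in short transits. Then I would spread those transits out in time. Throughout I use two facts. The black hole $b$ is adjacent to every vertex of $\nei{b}$ and may hold any number of agents. An agent at $b$ never counts against galactic independence. Hence \emph{replacing an agent's position by $b$ can only help independence}, and reachability is kept whenever the neighbouring positions in time lie in $\neicl{b}$. I also use that $S\cup T$ is not adjacent to any black hole, so $\conf{0}$ and $\conf{\ell}$ have no agent in $\nei{b}$. (I read the conclusion "plan for $G$" as a plan for $G'$ from $S$ to $T$.)

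\emph{Stage 1 (absorption).} For each agent $i$ and step $k$, set the new position to $b$ whenever $\conf{k}[i]\in\nei{b}$ and both $\conf{k-1}[i]$ and $\conf{k+1}[i]$ lie in $\neicl{b}$. I would iterate this until it is stable. Reachability holds because $b$ is adjacent to everything in $\nei{b}$, and independence holds by the remark above. After this, every maximal interval during which agent $i$ sits in $\nei{b}$ has length one. Such a visit is either an entry (outside $\neicl{b}$, then $\nei{b}$, then $b$), an exit ($b$, then $\nei{b}$, then outside), or a pass-by (outside, then $\nei{b}$, then outside). A pass-by with equal endpoints outside could also be resolved by waiting, but I would just keep it as a single-step visit.

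\emph{Stage 2 (separation).} Next I would remove the remaining pairs of consecutive configurations that both contain an agent in $\nei{b}$. Such a pair arises when agent $i$ is in $\nei{b}$ at step $k$ and agent $j\neq i$ is in $\nei{b}$ at step $k+1$. The idea is to delay visits so that between any two visits there is a configuration with $\nei{b}$ empty. For exits, I would keep the exiting agents inside $b$ one step longer while the rest of the configuration waits. For entries, I would let the entering agents go into $b$ immediately. The inserted configurations are copies of $\conf{k}$ with some agents replaced by $b$, so they are galactic independent.

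The main obstacle is Stage 2. Staggering moves that the original plan makes simultaneously can create intermediate configurations that mix old and new positions, and such mixtures need not be independent. An agent's old position may conflict with another agent's new position, although both are fine in their own time step. I would first try a "freeze" schedule: at each critical time, stop all agents outside $\neicl{b}$. Then resolve the agents in $\nei{b}$ one at a time, either by pushing them into $b$ or by releasing them one per inserted step. After that, resume the original move of the others. Mixtures then only involve agents at $b$ together with a single agent in $\nei{b}$. Any conflict would have to involve this single released agent and an agent outside that is adjacent to it in the original configuration, and this adjacency is excluded by independence of the original configuration at that step. Checking this last claim carefully, in particular for pass-by agents, is where I expect the proof to need the most care.
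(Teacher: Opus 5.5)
Your Stage~2 has a genuine gap, exactly where you expected it, and the claim you planned to check there is false.

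In the freeze schedule, take a leaving agent $i$, with $\mathcal{Q}_k[i]\in N(b)$ and $\mathcal{Q}_{k+1}[i]\notin N[b]$. Suppose it steps to $\mathcal{Q}_{k+1}[i]$ while an agent $h$ outside $N[b]$ is frozen at $\mathcal{Q}_k[h]$. The resulting configuration mixes positions from two time steps, and the independence of $\mathcal{Q}_k$ and of $\mathcal{Q}_{k+1}$ says nothing about it. For example, take consecutive path vertices $u,x,w,w'$ with $i\colon u\to x$ and $h\colon w\to w'$. Both $\{u,w\}$ and $\{x,w'\}$ are independent, but $x$ and $w$ are adjacent. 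Advancing $h$ first fails in the mirrored situation: on consecutive vertices $x,u,w',w$, $h$ at $w'$ is adjacent to $i$, which is still at $u$.

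So, in general:
\begin{itemize}
\item the last step of a leaving agent must be made \emph{simultaneously} with the outside agents' step $\mathcal{Q}_k[h]\to\mathcal{Q}_{k+1}[h]$;
\item symmetrically, so must the step of an agent arriving in $N(b)$ at time $k+1$ from outside $N[b]$;
\item these two events must be separated by a configuration with no agent in $N(b)$.
\end{itemize}
Hence the outside agents must perform their step twice and be moved \emph{back} in between. No freeze or delay schedule contains this idea. Stage~1 also does not deliver what you claim: a run of exactly two steps in $N(b)$ between two vertices outside $N[b]$ is untouched by your absorption rule.

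The paper's proof supplies exactly this rewind and needs no Stage~1. For each $t$ with $\mathcal{Q}_t\cap N(b)\neq\emptyset$, it inserts five configurations between $\mathcal{Q}_t$ and $\mathcal{Q}_{t+1}$:
\begin{enumerate}
\item every agent in $N[b]$ at time $t$ moves into $b$;
\item the remaining agents advance to $\mathcal{Q}_{t+1}$;
\item every agent now in $N[b]$ also moves into $b$;
\item the agents that touch $N[b]$ at neither time return to their $\mathcal{Q}_t$-positions, while the agents with $\mathcal{Q}_t[i]\in N[b]$ and $\mathcal{Q}_{t+1}[i]\notin N[b]$ re-emerge from $b$ at their $\mathcal{Q}_t$-positions;
\item all agents not in $b$ advance to $\mathcal{Q}_{t+1}$.
\end{enumerate}
Finally, the agents with $\mathcal{Q}_{t+1}[i]\in N[b]$ leave $b$ for $\mathcal{Q}_{t+1}[i]$. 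Each inserted configuration is $\mathcal{Q}_t$ or $\mathcal{Q}_{t+1}$ with some agents replaced by $b$. So galactic independence is automatic, and every move is along an edge or a wait. The first, third and fifth inserted configurations have no agent in $N(b)$, which gives the required alternation.
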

\begin{proof}
    Consider a step $t$ such that $\conf{t}\cap \nei{b}\ne \emptyset$.
    We now give five intermediate configurations $\mathcal{Q}_{t}^1,\mathcal{Q}_{t}^2,\mathcal{Q}_{t}^3,\mathcal{Q}_{t}^4,\mathcal{Q}_{t}^5$ between $\conf{t}$ and $\conf{t+1}$:
    \begin{align*}
    &\mathcal{Q}_{t}^1[i]=
    \begin{cases}
        b & \text{ if } \conf{t}[i]\in \neicl{b}\\
        \conf{t}[i] & \text{ otherwise }
    \end{cases}\\
    &\mathcal{Q}_{t}^2[i]=
    \begin{cases}
        b & \text{ if } \conf{t}[i]\in \neicl{b}\\
        \conf{t+1}[i] & \text{ otherwise }
    \end{cases}\\
    &\mathcal{Q}_{t}^3[i]=
    \begin{cases}
        b & \text{ if } \conf{t}[i]\in \neicl{b} ~\lor~ \conf{t+1}[i]\in \neicl{b}\\
        \conf{t+1}[i] & \text{ otherwise }
    \end{cases}\\
    &\mathcal{Q}_{t}^4[i]=
    \begin{cases}
        b & \text{ if } \conf{t+1}[i]\in \neicl{b}\\
        \conf{t}[i] & \text{ otherwise }
    \end{cases}\\
    &\mathcal{Q}_{t}^5[i]=
    \begin{cases}
        b & \text{ if } \conf{t+1}[i]\in \neicl{b}\\
        \conf{t+1}[i] & \text{ otherwise }
    \end{cases}
\end{align*}
    Note that the galactic independence of these five configurations is followed by the galactic independence of $\conf{t}$ and $\conf{t+1}$.
    We claim that the sequence $[\conf{t},\mathcal{Q}_{t}^1,\mathcal{Q}_{t}^2,\mathcal{Q}_{t}^3,\mathcal{Q}_{t}^4,\mathcal{Q}_{t}^5, \conf{t+1}]$ is a plan for $G$ and satisfies the required conditions, that is, no two consecutive configurations have an agent in the neighborhood of $b$.
    For this purpose, we classify the agents into the following four types:
    \begin{enumerate}
    \item $\conf{t}[i]\in \neicl{b} ~\land~ \conf{t+1}[i]\in \neicl{b}$,
    \item $\conf{t}[i]\in \neicl{b} ~\land~ \conf{t+1}[i]\notin \neicl{b}$,
    \item $\conf{t}[i]\notin \neicl{b} ~\land~ \conf{t+1}[i]\in \neicl{b}$, and
    \item $\conf{t}[i]\notin \neicl{b} ~\land~ \conf{t+1}[i]\notin \neicl{b}$,
\end{enumerate}
and verify their locations in each configuration.
For each agent $i$, a sequence $[\conf{t}[i],\mathcal{Q}_{t}^1[i],\mathcal{Q}_{t}^2[i],\mathcal{Q}_{t}^3[i],\mathcal{Q}_{t}^4[i],\mathcal{Q}_{t}^5[i],\\\conf{t+1}[i]]$ of vertices consists of:\\
1. $[\conf{t}[i], b,b,b,b,b,\conf{t+1}[i]]$, \\
2. $[\conf{t}[i], b,b,b,\conf{t}[i],\conf{t+1}[i],\conf{t+1}[i]]$, \\
3. $[\conf{t}[i], \conf{t}[i],\conf{t+1}[i],b,b,b,\conf{t+1}[i]]$, and\\
4. $[\conf{t}[i], \conf{t+1}[i],\conf{t+1}[i],\conf{t+1}[i],\conf{t}[i],\conf{t+1}[i],\conf{t+1}[i]]$.\\
Thus, every agent either moves to an adjacent vertex along an edge or stays at its current vertex, ensuring the reachability.
Moreover, if agent $i$ is located in a neighbor of $b$, it moves out of the open neighborhood of $b$ in the next step.
The lemma follows by applying this procedure to every $t \in 
[0, \ell-1]$ such that $Q_t \cap N(b) \neq \emptyset$ holds.
\end{proof}

In what follows, we transform the plan on $G'$ obtained above into one that satisfies additional restrictions on $\project_t$ and $\absorb_t$.

\begin{lem}\label{lem: blackhole}
Suppose that $G'$ is obtained by applying \Cref{red: BFS} to replace a component $C$ with a black hole $b$.
Let $\Pi=[\conf{0},\conf{1},...,\conf{\ell}]$ be a plan for $G'$ from $S$ to $T$.
One can transform it to a plan $\Pi'=[\conf{0}',\conf{1}',...,\conf{\ell'}']$ with the following two conditions:
\begin{enumerate}
        \item no two consecutive configurations of $\Pi'$ have an agent in the neighborhood of $b$, and 
        \item $|\project_t|+|\absorb_t|\le 1$ for every $t \in [0,\ell']$.
    \end{enumerate}
\end{lem}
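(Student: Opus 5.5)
Starting from the plan $\Pi$ for $G'$, we first apply \Cref{lem:nextempty}, which lets us assume that no two consecutive configurations have an agent in $\nei{b}$. The remaining task is to serialize the exchanges with $b$: within each step, at most one agent should leave or enter $b$, and condition 1 must survive this serialization.

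Fix a step $t$ with $|\project_t|+|\absorb_t|\ge 2$, and write $\project_t=\{p_1,\dots,p_a\}$ and $\absorb_t=\{q_1,\dots,q_c\}$. By condition 1, every agent in $\absorb_t$ lies in $\nei{b}$ at time $t-1$. Every agent in $\project_t$ lies in $\nei{b}$ at time $t$. All other agents stay away from $\nei{b}$ at time $t-1$ or at time $t$.

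We replace the single transition $\conf{t-1}\to\conf{t}$ by a sequence of intermediate configurations as follows.
\begin{enumerate}
    \item All agents not in $\project_t\cup\absorb_t$ make their moves of step $t$. This is reachable, and galactic independence holds because planet agents are then a subset of the union of positions that are already independent. One has to check that these positions do not conflict with the $q_j$ still waiting in $\nei{b}$. A conflict would require some agent to be in $\nei{b}$ or adjacent to it at time $t$ while $q_j$ was at time $t-1$ adjacent to it. That agent's position must then be compared with $q_j$'s old position. Here the configuration at time $t$ contains no agent of $\nei{b}$ other than the $p$'s, which helps.
    \item The agents $q_1,\dots,q_c$ enter $b$ one per configuration. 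Removing an agent from the planets never breaks galactic independence.
    \item The agents $p_1,\dots,p_a$ leave $b$ one per configuration, each to its target vertex $\conf{t}[p_i]\in\nei{b}$. Each intermediate configuration consists of a subset of the planet agents of $\conf{t}$, so it is galactic independent.
\end{enumerate}

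For condition 1, we insert between consecutive entries and exits an extra step in which the agent just placed in $\nei{b}$ has not yet been followed by another. This is delicate. In phase 3, after $p_1$ is placed in $\nei{b}$, the next configuration would still contain $p_1$ in $\nei{b}$. To handle this, we instead move $p_i$ out of $\nei{b}$ immediately, along its step $t+1$ move. That move leaves $\nei{b}$ by condition 1 applied to $\Pi$, and it is performed before releasing $p_{i+1}$. Symmetrically, in phase 2 the agents $q_j$ must not wait in $\nei{b}$ across consecutive configurations. So each $q_j$ should be moved into $\nei{b}$ from its time $t-2$ position only at the moment it is absorbed, which means the time $t-1$ moves into $\nei{b}$ are also deferred.

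The main obstacle is this interplay. Serialization naturally forces agents to wait in $\nei{b}$, and condition 1 forbids that. I would handle it by treating the steps $t-1,t,t+1$ around each exchange as a block. The agents $q_j$ sit at their pre-entry positions outside $\nei{b}$, and the agents $p_i$ are sent from $b$ through $\nei{b}$ to their step $t+1$ positions. Each such two-move hop occupies $\nei{b}$ for exactly one configuration, with an empty step between hops. Independence must be verified against the frozen non-exchanging agents. Finally, I would apply this procedure to every offending $t$, left to right, re-invoking \Cref{lem:nextempty} where needed.
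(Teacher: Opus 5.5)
Your proposal has a genuine gap. Every repair you describe builds intermediate configurations that mix time slices: some agents stand at their positions from one time step, others at positions from a different time step. $\Pi$ gives no independence guarantee for such mixtures, because an agent may legally move next to a vertex that another agent vacates in the same step, or onto one vacated a step earlier.

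Concretely, take an induced path $y',y,z,x$ of planets with only $x\in\nei{b}$. Let $q_j$ move $z\to x\to b$ and a non-exchanging agent $k$ move $y'\to y\to z$ over times $t-2,t-1,t$, and add a second absorber far away so that step $t$ is offending. This is legal in $\Pi$ and consistent with condition 1.

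\begin{itemize}
\item Your Phase 1 puts $k$ on $z$ while $q_j$ still sits on $x$, so $\dist_P(x,z)=1$. The remark that $\conf{t}$ has no agent in $\nei{b}$ does not help, since $z\notin\nei{b}$.
\item Parking $q_j$ at its pre-entry vertex $z$ also fails. The vertex $z$ is $k$'s time-$t$ position and is adjacent to $k$'s time-$(t-1)$ position $y$, so parking there is compatible only with $k$ being back at $y'$.
\item Symmetrically, releasing $p_i$ to $\conf{t+1}[p_i]$ while the others sit at their time-$t$ positions can place it next to an agent that only moves away at step $t+1$.
\end{itemize}

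Your sentence that independence ``must be verified against the frozen non-exchanging agents'' is exactly the step that is false. Separately, re-invoking \Cref{lem:nextempty} is not harmless. Its first transition $\conf{t}\to\mathcal{Q}^1_t$ sends every agent of $\nei{b}$ into $b$ at once, so it can re-create violations of condition 2.

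The missing idea is to never mix time slices, which means rolling back \emph{all} agents, not just the waiting ones. First note that condition 1 already gives $\project_t=\emptyset$ or $\absorb_t=\emptyset$: absorbed agents are in $\nei{b}$ at time $t-1$, projected ones at time $t$. So by symmetry only absorptions occur.

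The paper picks $i_1\in\absorb_t$ and inserts $\conf{t-1}\to\mathcal{Q}^6_{t-1}\to\mathcal{Q}^7_{t-1}\to\conf{t}$. The agents of $\absorb_t\setminus\{i_1\}$ jump into $b$ and stay there. Every other agent synchronously undoes its step-$(t-1)$ move (back to $\conf{t-2}$), redoes it, and then performs its step-$t$ move, in which only $i_1$ enters $b$.

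On the planets, each inserted configuration is a sub-configuration of $\conf{t-2}$ or $\conf{t-1}$, so galactic independence is inherited for free. Condition 1 follows from $\conf{t-2}\cap\nei{b}=\emptyset$. Iterating peels off one absorber at a time, and projections are handled by the time-reversed construction using $\conf{t+1}$.

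One detail needs care when you write this out. An agent that left $b$ in the step just before $\conf{t-1}$ is re-absorbed by the back-step, together with the absorbers sent into $b$ there. Process steps left to right, so that there is at most one such agent. Then let the first back-step absorb only that agent, keep it parked in $b$, and release it at the last return to $\conf{t-1}$.
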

\begin{proof}
    Assume that $\Pi$ satisfies condition 1 by applying \Cref{lem:nextempty}.
    Consider a step $t\in [1,\ell]$ such that $|\project_t|+|\absorb_t|> 1$.
    Note that it holds that $t\notin \{1,\ell\}$ by the fact that $b\notin \neicl{S\cup T}$, and $|\project_t| = 0 \lor |\absorb_t| = 0$ by condition 1.
    Here, by symmetry, we assume that $|\absorb_t|>0$ and $|\project_t|=0$.
    Let $i_1\in \absorb_t$ be an agent, and consider the sequence $[\conf{t-1},\mathcal{Q}_{t-1}^6,\mathcal{Q}_{t-1}^7, \conf{t}]$ with the following:
    \begin{align*}
     &\mathcal{Q}^6_{t-1}[i]=
    \begin{cases}
        b & \text{ if } i\in \absorb_t\setminus \{i_1\}\\
        \conf{t-2}[i] & \text{ otherwise }
    \end{cases}\\
    &\mathcal{Q}^7_{t-1}[i]=
    \begin{cases}
        b & \text{ if } i\in \absorb_t\setminus \{i_1\}\\
        \conf{t-1}[i] & \text{ otherwise. }
    \end{cases}
\end{align*}
In other words, each agent $i \notin \absorb_t\setminus \{i_1\}$ follows the sequence of moves $\conf{t-1}[i] \to \conf{t-2}[i] \to \conf{t-1}[i] \to \conf{t}[i]$, and $i\in \absorb_t\setminus \{i_1\}$ moves from $\conf{t-1}[i]$ to $b=\conf{t}[i]$; hence, this sequence of configurations is a plan for $G'$.
Moreover, the plan satisfies the condition 1, since $\conf{t-2}\cap \nei{b}=\emptyset$ holds from \Cref{lem:nextempty}.

By inserting this subsequence between $\conf{t-1}$ and $\conf{t}$, the size of $\absorb_t$ decreases by one.
Applying this process iteratively, we eventually achieve $|\absorb_t| = 1$.
We thereby obtain a sequence in which the number of steps satisfying $|\project_t|+|\absorb_t|>1$ is strictly smaller than that of $\Pi$.
By symmetry, a similar argument is applicable to the case where $|\absorb_t|=0$ and $|\project_t|>0$.
Therefore, we can obtain a plan for $G'$ that satisfies conditions~1 and~2.
This completes the proof.
\end{proof}

We now demonstrate that \Cref{red: BFS} is safe using the aforementioned lemmas.

\begin{lem}\label{lem:BFSsafe}
    \Cref{red: BFS} is safe.
\end{lem}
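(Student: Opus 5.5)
One direction is already done: if $G$ has a plan from $S$ to $T$, then by \Cref{FPT:reducebefore} (with $W=C$) so does $G'$. For the converse, take a plan for $G'$ and apply \Cref{lem: blackhole}. This gives a plan $\Pi=[\conf{0},\dots,\conf{\ell}]$ with two properties: no two consecutive configurations have an agent in $\nei{b}$, and at each step at most one agent enters or leaves $b$. I will lift $\Pi$ to a plan in $G$ in which the agents "inside" $b$ are parked on vertices of a fixed path $R\subseteq C$. The lift runs by induction on $t$, in the style of the compatibility argument of \Cref{lem:adjsafe}, using the invariant that every agent at $b$ in $\conf{t}$ occupies a distinct parking vertex of $R$.

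\textbf{Constructing the path.} Pick $v\in C\cap L_k$ with $k\ge 2n+3$, and a shortest path in $G[P]$ from $v$ to $S\cup T$. Its first $k-2$ vertices starting at $v$ lie at layer $\ge 3$, hence outside $\neicl{S\cup T}$. I claim they also lie in $C$. The path is contained in $P$, and the vertices at layer $\ge 2$ are in $V_p$, so this stretch is connected in $G_p$. One must also check that it avoids black holes; since shortest paths are taken in $G[P]$, it does. Call this path $R=(p_1=v,\dots,p_{k-2})$, which has at least $2n+1$ vertices. The parking slots are $p_1,p_3,\dots,p_{2n-1}$, which are pairwise non-adjacent in $G[P]$ because $R$ is a shortest path. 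The vertices $p_{2n},\dots$ and the rest of $C$ serve as corridors. Every vertex of $C$ is far from $S\cup T$, but planets outside $C$ that are adjacent to $C$ could sit next to $R$. To avoid that, I choose slots on the part of $R$ at layer $\ge 4$ or so, at the cost of shifting the constant slightly. The real constraint is only galactic independence among planets, and agents outside $b$ in $\Pi$ are at non-$\nei{b}$ vertices whenever a transfer happens.

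\textbf{Simulating the steps.} Consider a step where no agent crosses into or out of $b$: agents outside $b$ copy their moves, and parked agents stay, which is valid since they are far from all outside agents. Now consider a step where exactly one agent $i$ enters $b$ from $u\in\nei{b}$. By the first property of \Cref{lem: blackhole}, the previous and next configurations have no other agent in $\nei{b}$. I freeze all outside agents and insert a subplan inside $C$. First, the parked agents are shifted further along $R$ toward its far end, one at a time with the farthest first, so as to keep gaps of at least one vertex. Then $i$ walks from $u$ into $C$ and along a path in $C$ to $p_1$, and on to its free slot. The walk from $u$ to $p_1$ exists because $C$ is connected. During this walk, $i$ is the only agent in $C\setminus R$ and the only agent near $\nei{C}$, and parked agents lie on $R$ beyond distance $1$. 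Leaving is symmetric: the parked agent nearest $p_1$ walks out to the target vertex in $\nei{b}$, and the others shift back.

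\textbf{Main obstacle.} The delicate point is the second property. When $i$ walks through $C$ to the front of $R$, its route may pass adjacent to parked agents if $R$ meets that route. I handle this by always keeping parked agents packed at the far end $p_{k-2},p_{k-4},\dots$ of $R$. The entering agent then travels to $p_1$ and moves forward along $R$ only up to its slot. Any route in $C$ reaching $p_1$ would have to meet the far end of $R$, but it can be chosen as a shortest path to $v$. Since layers on $R$ increase toward $v$, the route can be routed so that it stays within layer below the packed slots. Verifying this layer argument carefully, and fixing the exact threshold $2n+3$, is where most of the work lies. After all steps are simulated, the lifted sequence starts at $S$, ends at $T$ (since $b\notin\neicl{T}$, no agent is parked at the end), and satisfies reachability and galactic independence.
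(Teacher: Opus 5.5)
Your overall frame is the paper's: the forward direction via \Cref{FPT:reducebefore}, normalization of a $G'$-plan with \Cref{lem: blackhole}, parking absorbed agents on the stretch $R$ of a shortest $G[P]$-path from $v$ toward $S\cup T$ that lies in layers $\ge 3$ (you correctly show $R\subseteq C$), and simulating one crossing of $b$ at a time. The gap is in the step that carries all the content: moving an agent between $\nei{C}$ and $R$ without ever becoming adjacent to a parked agent.

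Your resolution is to park at $p_{k-2},p_{k-4},\dots$ and walk the entering agent to $p_1=v$. This fails. Layers change by at most one per edge, so any walk from the entry vertex (layer $2$ when $u$ is a planet) to $p_1$ (layer $k$) visits every layer $3,\dots,k$, including the layers of your slots. It therefore cannot ``stay below the packed slots''.

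Concretely, suppose $C$ induces a path attached to $\neicl{S\cup T}$ only at its layer-$2$ end. Then $R$ is $C$ minus that end vertex, the entry vertex is already adjacent to $p_{k-2}$, and every walk to $p_1$ runs through your parked agents. Symmetrically, ``the parked agent nearest $p_1$'' cannot exit past the others. So the threshold $2n+3$ is never actually verified.

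Two side remarks on the threshold. Shifting the constant is not an option, since the rule fixes it. It is also unnecessary: planet neighbours of vertices at layer $\ge 3$ have layer $\ge 2$ and hence lie in $C$, so outside agents can never be adjacent to $R$.

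The missing idea, which the paper uses, is to route along a shortest path $R'$ from $u$ to the vertex set of $R$, not to a fixed endpoint. By minimality of $R'$, only its vertex $v'$ just before $R$ is adjacent to $R$. Since $R$ is geodesic, $\nei{v'}\cap V(R)$ lies within three consecutive vertices of $R$. While $i$ is still on $R'$ before $v'$, it is not adjacent to $R$. During that time, slide the at most $n-1$ parked agents along $R$ into an independent set avoiding this window and the closed neighbourhood of the landing vertex $r'$ on $R$. Removing at most four consecutive vertices from a path on $k-2\ge 2n+1$ vertices still leaves an independent set of size $n-1$; this is exactly where $k\ge 2n+3$ is used.

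Departures are symmetric. Take a shortest path from the exit vertex to $R$ and send out whichever parked agent can reach it. Then swap its identity with that of the agent leaving $b$ in $\Pi$ for the rest of the plan. This is legitimate because $T$ is a set, but your proposal leaves the relabelling implicit.

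Finally, ``freeze all outside agents'' needs care. \Cref{lem: blackhole} only guarantees that $\nei{b}$ is empty in $\conf{t-1}$ and $\conf{t+1}$. So $\conf{t}$ may hold other agents in $\nei{b}$ adjacent to $i$'s entry vertex, while $\conf{t+1}$ may hold agents adjacent to $u$. Hence $i$'s step from $u$ into $C$ must be performed simultaneously with the other agents' step from $\conf{t}$ to $\conf{t+1}$, with the parked agents rearranged beforehand.
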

\begin{proof}
    Let $G_1$ be a graph, and let $G_2$ be a graph after \Cref{red: BFS} is applied to a component $C$.
    By \Cref{FPT:reducebefore}, if there is a plan from $S$ to $T$ for $G$, then there is a plan for $G'$.
    
    Assume that there is a plan $\Pi = [\conf{0},\conf{1},...,\conf{\ell}]$  for $G_2$ from $S$ to $T$.
    By \Cref{lem: blackhole}, we can assume that $\Pi$ satisfies the following two conditions:
    \begin{inparaenum}[(i)]
        \item no two consecutive configurations of $\Pi$ have an agent in the neighborhood of $b$, and 
        \item $|\project_t|+|\absorb_t|\le 1$ for every $t \in [0, \ell]$.
    \end{inparaenum}
    Consider a vertex $v \in L_k\cap C$ with $k\ge 2n+3$, and a shortest path $R$ length from $v$ to a vertex $u\in S\cup T$.
    Note that for every pair of distance-$1$ configurations $X$ and $Y$ in $R$ with $|X|=|Y|\le n$, there is a plan from $X$ to $Y$ that uses only the vertices of $R$.
    Accordingly, for an agent $i$ that moves to $b$ at step $t$ on $G_2$, we will instead route $i$ to a vertex on the path $R$ on $G_1$.

    Here, for an agent $i$ on $G_2$ and step $t'\in [0,\ell-1]$, if $\conf{t'}[i] \neq b$ and $\conf{t'+1}[i]\ne b$, then the same move can be done on $G_1$, since $G[V(G_1)\setminus C]$ is isomorphic to $G[V(G_2)\setminus \{b\}]$.
    Now we assume that $\conf{t'}[i]=b$ or $ \conf{t'+1}[i]= b$.
    Since $b\notin S\cup T$, there is a step $t \le t'+1$ such that $\conf{t}[i]\in \nei{b}$ and $\conf{t+1}[i]=b$.
    We now show that the agent $i$ on $G_1$ can \emph{imitate} such a move on $C$, that is, agent $i$ can move to some vertex of $R$ while ensuring that every other agent $j$ currently contained in $C$ is also located at some vertex in $R$.

    Consider shortest paths from $\conf{t}[i]$ to every vertex of $R$, and let $R'$ be the shortest one among them.
    We move agent $i$ along this path $R'$ and place it at a vertex $r'$ on the path $R$.
    Once $i$ has moved to some vertex of $C$, for each agent $j\neq i$ contained in $\neicl{C}\setminus C$, we move it to $\conf{t+1}[j]$. 
    This ensures that $i$ and $j$ are never on adjacent vertices, since $\conf{t+1}[j]\notin \neicl{b}$ on $G_2$, and thus $\conf{t+1}'[j]\notin \neicl{C}$ on $G_1$.
    Before $i$ reaches the endpoint $r'$, agent $i$ could become adjacent to an agent $k$ on $R$, violating the independence.
    In this case, $i$ is at the vertex $v'$ on $R'$ adjacent to $r'$.
    Note that $N(v')\cap R$ consists of at most three consecutive vertices on $R$; otherwise, we would obtain a path shorter than $R$ via $v'$.
    Accordingly, we ensure that no agent is placed on $N(v')\cap R$ before agent $i$ moves to $v'$.
    Here, after removing at most three vertices in $N(v')\cap R$ from $R$, there remain at least $2n-2$ vertices, which can accommodate an independent set of size $n-1$.
    Then, we move the agents on $R$ so that they form this independent set.
    After completing these moves, agent $i$ can move from $v'$ to $r'$.
    Therefore, the agent $i$ on $G_1$ can imitate a move on $G_2$ from a vertex in $\nei{b}$ to $b$.
    
    Until agent $i$ moves from $b$ to a vertex in $\nei{b}$ on $G_2$, it can be kept at some vertex of $R$.
    This holds because, whenever another agent $j$ moves to $b$, agent $j$ can imitate its move on $G_2$.

    From here on, we consider the reverse move, that is,  $i$ moves from $b$ to a vertex in $v\in \nei{b}$ at step $t_1$.
    we try to move $i$ from a vertex on $R$ to $v\in \nei{b}$.
    However, since there may be other agents on $R$, it might be impossible to move $i$ without violating the independence constraint. 
    Nevertheless, by considering the shortest path $R'$ from $v$ to a vertex $R$ again, we can see that at least one agent $j$ can be moved to $v$.
    Then, consider the sequence obtained by swapping $\conf{t'}[i]$ and $\conf{t'}[j]$ in every configuration $\conf{t'}$ with $t'\ge t_1$.
    This sequence is also a plan from $S$ to $T$ on $G_2$, and it yields a sequence in which $j$ moves to $v$ on $G_1$.
    Here, every agent $i$ holds that $\conf{\ell}[i]\notin \neicl{b}$, since $b\notin \neicl{T}$.
    Eventually, every agent moves from $b$ to some planet.
    At some step, agent $i$ can imitate the move of some agent moving from $b$ to $\nei{b}$.
    This completes the proof.
\end{proof} 

\Cref{lem:adjsafe} and \Cref{lem:BFSsafe} immediately lead to \Cref{lem: FPTcorrect}.

We bound the running time to obtain an instance after applying \Cref{red: adj,red: BFS} until no updates are possible.
We first compute the layers in time $O(|V(G)|+|E(G)|)$ using Breadth-first search.
\Cref{red: adj} searches for adjacent black holds and \Cref{red: BFS} contract a component to a vertex.
Thus, each application of the rules requires $O(|V(G)|+|E(G)|)$ time.
The running time is $O(|V(G)|+|E(G)|)$. 

We now provide an upper bound on the size of the obtained instance. 
We first claim that the size of $B$ is bounded by $\Delta\cdot |P|$.
Since there is no edge $b_1b_2\in B$ such that $b_1,b_2$ are both in $B$, the remaining black hole $b\in B$ holds that $\nei{b}\subseteq P$.
Moreover, we can see that the degree of vertex $v\in P$ does not increase by applying \Cref{red: adj,red: BFS}.
Thus, $|B|$ can be upper bounded by $|\neicl{P}|\le \Delta |P|$.
It remains only to bound $|P|$.

\kerneldeg*
\begin{proof}
    Consider the instance obtained after applying the rules until no updates are possible.
    Since every planet $p\in P$ is contained in the layer $L_k$ with $k \le 2n+2$, the size of $P$ is bounded from above by $\sum_{0\le k\le 2n+2} |L_k|$.
    Note that, since the maximum degree of $G[P]$ is at most $\Delta$, we have $|L_k| \le \Delta \cdot |L_{k-1}|$ for each $k \in [1,2n+2]$.
    It follows from $L_0 = S \cup T$ that 
    \begin{align*}
        & \sum_{0\le k\le 2n+2}|L_k|\\
        \le & \sum_{0\le k\le 2n+2} |S\cup T| \cdot \Delta^k \\
        \le &~ 2n \cdot \frac{\Delta^{2n+3}-1}{\Delta-1}\\
        \le &~ 2n \cdot \Delta^{2n+3}.
    \end{align*}
    This completes the proof.
\end{proof}

Here, when an induced grid graph is given, the above bound is overestimated. 
Any vertex within distance $k$ from a vertex $v$ lies inside the grid of width $2k$ centered at $v$.
Thus, $\sum_{0\le k\le 2n+2}|L_k| = O(|S\cup T|\cdot n^2) = O(n^3)$.
This leads to the following theorem.

\kernelgrid*

An {\FPT} algorithm for $1\iumapf$ immediately follows.
\galFPT*
\begin{proof}
    Let $(G,S,T)$ be an instance of ${1\iumapf}$.
    We transform it to an instance $(G', S, T)$ of {\giumapf}, where a set of planets is $V(G)$ and a set of black holes is $\emptyset$.
    By applying \Cref{thm: kernel}, we obtain a kernel with $\Delta^{O(n)}n$ vertices.
    Then the claim follows from a brute-force search.
\end{proof}

\paragraph{Remark.}
We note that the lemma established above does not specialize to \giumapf and also applies to \textsc{Galactic Token Sliding}, a \emph{galactic} variant of \textsc{Token Sliding}. 
Therefore, we can say that \textsc{Galactic Token Sliding} admits a kernel with $\Delta^{O(n)}n$ vertices in general, and with $O(n^3)$ vertices when an induced grid graph is given.
This claim improves the known upper bound on the kernel size $\Delta^{O(n^2)}n$ in \cite{gal:Bartier23}.
\section{Omitted pseudocode in \Cref{sec:PIBT}}\label{sec:omittedcode}
In order, \Cref{alg:deadlock,alg:TSWAP,alg:lacam,alg:PIBTlacam} illustrate pseudocodes for deadlock detection and resolution, detection of $\textsc{SWAP}(i,v)$, the strategy of \searchname, and a modified configuration generator using \algname.
All implementations are shown in the code appendix.

\begin{algorithm}[H]
		\caption{detect a deadlock starting with $i$}
		\label{alg:deadlock}
		\begin{algorithmic}[1]  
			\Require agent $i\in [n]$, vertex $u\in \neicl{\qfrom{i}}$
			\Ensure list of agents $L$, that there is a deadlock, or $\bot$ (means empty list)
            \State $v\leftarrow \nextvr{r}(u,\goal{i})$\Comment{$\dist(u,v)=r$}
            \If{$\exists j$ s.t. $\qfrom{j}=v\land j\ne i\land \qto{j}=\bot$} $L\leftarrow [j]$
            \Statex\Comment{$\exists j \Rightarrow\dist(\qfrom{i},u)=r+1$}
            \Else~ \Return $\bot$
            \EndIf
            \While{$j\ne i$}
            \State $v'\leftarrow \nextvr{r+1}(\qfrom{j},\goal{j})$
            \If{$\neg(\exists j$ s.t. $\qfrom{j}=v'\land \qto{j}=\bot)$}~\Return $\bot$
            \ElsIf{$\exists a$ s.t. $L[a]=j$}~\Return $\bot$
            \Else ~$L\leftarrow L+[j]$
            \EndIf
            \EndWhile
            \State \Return $L$
		\end{algorithmic}
	\end{algorithm}

    \begin{algorithm}[H]
    {\small
		\caption{rule 2: swap detection}
		\label{alg:TSWAP}
		\begin{algorithmic}[1]  
			\Require agent $i$ and vertex $v\in \neicl{\qfrom{i}}$ 
			\Ensure an agent $j\in A$ that needs to swap the target
            \State $u\leftarrow \nextvr{r}(v,\goal{i})$\Comment{$\dist(u,v)=r$}
            \If{$\exists j$ s.t. $\qfrom{j}=u\land \qto{j}= \bot$}
            \If{$\goal{j}=u$}~\Return $j$ \Comment{$\exists j \Rightarrow\dist(\qfrom{i},u)>r$}
            \EndIf
            \EndIf
            \State \Return $\bot$
		\end{algorithmic}
        }
	\end{algorithm}

{
\begin{algorithm}[ht]
\caption{LaCAM for \riumapf}
\label{alg:lacam}
\begin{algorithmic}[1]
\small
\State initialize \open, \explored
\State $\N\init \leftarrow \langle
S, \llbracket~\mathcal{C}\init~\rrbracket , g\init, \mathcal{B}\init
\rangle$ \Comment{$\mathcal{C}\init, \mathcal{B}\init$: no constraints}
\label{algo:lacam:init-node}
\Statex $\langle
\text{config, constraints, assignment, banned\_list}
\rangle$
\State $\text{OPEN}.\push(\N\init)$;~~$\explored[S] = \N\init$
\While{$\text{OPEN} \neq \emptyset$}
\State $\N \leftarrow \text{OPEN}.\funcname{top}()$
\IfSingle{$\N.\config = T$}{\Return plan}
\label{algo:lacam:backtrack}
\IfSingle{$\N.\tree = \emptyset$}{$\text{OPEN}.\pop()$;~\Continue}
\State $\C \leftarrow \N.\tree.\pop()$
\State $\funcname{update\_constraints}(\N, \mathcal{C})$
\label{algo:lacam:lowlevel-search}
\Comment{constraints synthesis}
\State $\Q\new, g\new \leftarrow \funcname{configuration\_generator}(\N, \mathcal{C})$
\label{algo:lacam:config-generator}
\IfSingle{$\Q\new = \bot$}{\Continue}
\State \textcolor{blue}{$\funcname{detect\_livelock\_and\_reassignment}(\Q\new, g\new)$}\label{lacam:livelock}
\Statex\Comment{for resolving livelock}
\IfSingle{$\explored[\Q\new] \neq \bot$}{\Continue}
\State $\N\new \leftarrow \langle
\Q\new,~
\llbracket~\mathcal{C}\init~\rrbracket,~
g\new,~
\mathcal{B}\init
\rangle$
\State $\text{OPEN}.\push(\N\new)$;\;$\explored[\Q\new] = \N\new$
\EndWhile
\State \Return \texttt{NO}\_\texttt{PLAN}
\label{algo:lacam:unsolvable}

\Function{\funcname{configuration\_generator}}{$\N, \mathcal{C}$}
\State Define $\Q\suf{to}[i]\coloneqq v \text{ if } (i,v) \in \mathcal{C} \text{ else } \bot$
\Statex \Comment{$\mathcal{C}$: list of $(\text{agent},\text{destination} )$}
\State $\Q, g \leftarrow \N.\text{config}, \N.g$
\State $\Q\new, g\new\leftarrow\Call{\algname}{\Q, T,\N.g}$ initialzed with $\Q\suf{to}$
\State\Return $\Q\new, g\new$
\EndFunction
\end{algorithmic}
\end{algorithm}
}

\begin{algorithm}[ht]
    \caption{Configuration generator for \lacam}
    \label{alg:PIBTlacam}
    {\small
    \begin{algorithmic}[1]  
        \Require configuration $\mathcal{Q}^{\text{from}}$, constraints $\mathcal{C}$, goals $T$, temporal target assignment $g:[1,n]\to T$, order $P$ of agents
        \Ensure configuration $\mathcal{Q}^{\text{to}}$, new assignment $g'$ \\\Comment{initialized by $((\bot)^n, g, p)$}
        \For{$(i,v)\in \mathcal{C}$} $\qto{i}=v$
        \EndFor
        \If{$\exists i,j\in A$ s.t. $i\ne j\land \dist(\qto{i}, \qto{j})\le r$}
        \State\Return $\bot, g$ \Comment{not a distance-$r$ independent config}
        \EndIf
        \For{$i\in P$} \Comment{respect the order $P$, indicates the priority}
        \If{$\qto{i}=\bot$}~$\Call{\procname}{i,[~],g}$        
        \EndIf
        \EndFor
        \If{$\exists i,j\in A$ s.t. $i\ne j\land \dist(\qto{i}, \qto{j})\le r$}
        \State\Return $\bot, g$ \Comment{not a distance-$r$ independent config}
        \EndIf
        \State \Return $\mathcal{Q}^{\text{to}},g$
    \end{algorithmic}
    }
\end{algorithm}
\section{Omitted Discussions in \Cref{sec:PIBT}}\label{sec:omittedPIBTproofs}

For a timestep $\tau$, let $\conf{\tau}$ and $g_{\tau}$ denote the inputs $\mathcal{Q}^{from}$ and $g$ at $\tau$. 
Moreover, we use the inverse function $g^{-1}_{\tau}$ of $g_{\tau}$.
For a vertex subset $S$ of a graph $G$, let $\indG{S}$ denote the subgraph induced by $S$.
We also use a notation used in \Cref{sec:omittedFPTproofs}; for a graph $G = (V,E)$, we define the \emph{open neighborhood} $\nei{v}$ of $v\in V$ as $\nei{v}=\{w \in V\mid vw\in E\}$, and $\nei{V'}$ for $V'\subseteq V$ as $\neicl{V'}\setminus V'$ for the sake of simplicity.
\Cref{lem: correctconf} provides the soundness of \Cref{alg:PIBT}. 

Note that, since $\qto{i}$ is determined sequentially for agent $i$, we can construct a total order based on the times at which the procedure $\Call{\procname}{i,\ast, \ast}$ for agent $i\in A$ returned $\valid$ or $\invalid$; without loss of generality, we now assume that agent $i \in A $ is the $i$-th agent whose destination $\qto{i}$ is determined.\footnote{Note that this algorithm is designed not to be parallelized. Thus, for every pair of agents $x,y\in [n]$, $\qto{x}$ and $\qto{y}$ are not determined simultaneously.}
Furthermore, we can see that once $\procname(i,*,*)$ returns a value in $\{\valid, \invalid\}$ for agent $i\in A$, it will not be called again for agent $i$.

We start with the following simple observations, which are easily derived by the if-statements in lines \ref{pibt:dist},\ref{pibt:rotation},\ref{pibt:recursive}--\ref{pibt:valid}.
\begin{obsv}\label{obs:pibtif}
    If $\Call{\procname}{i, S,\ast}$ returns $\valid$ for an agent $i\in A$ and a list $S$ of agents, then every agent $j\in A\setminus \{i\}$ holds neither $\qto{j}\in \neiclr{r}{\qto{i}}$ nor $\qfrom{j}\in \neiclr{r}{\qto{i}}~\land~j\in S$.
\end{obsv}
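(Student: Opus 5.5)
This observation is read off directly from the control flow of \procname, so the plan is to trace the unique return path that yields \valid and check which conditions have been verified at that moment. The only \texttt{return} \valid\ is at line~\ref{pibt:valid}. It is reached inside the iteration of the loop at line~\ref{pibt:loop_candidate} for some candidate $v$, and then $\qto{i}=v$ was set at line~\ref{pibt:tmp}. Since $i$ is never reprocessed once a value is returned, $\qto{i}$ stays $v$. It therefore suffices to show that, at the time of return, no $j\neq i$ has $\qto{j}\in\neiclr{r}{v}$ and no $j\in S$ has $\qfrom{j}\in\neiclr{r}{v}$.

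The second condition follows from line~\ref{pibt:rotation}. The candidate $v$ passed that test, so no $j\in S$ has $\qfrom{j}\in\neiclr{r}{v}$. Both $S$ and \Q^{\text{from}} are fixed during the call, so this remains true at return time.

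For the first condition I split the agents $j\neq i$ by when $\qto{j}$ became non-$\bot$.

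\emph{Case 1: $\qto{j}$ was assigned before the test at line~\ref{pibt:dist}.} Then that test rejects $v$ whenever $\qto{j}\in\neiclr{r}{v}$. Since $v$ was accepted, such $j$ are fine.

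\emph{Case 2: $\qto{j}$ was assigned during the loop at lines~\ref{pibt:checknei}--\ref{pibt:checknei_end}.} This happens through a recursive call, or possibly through a deeper recursion. Here the intended argument is that every agent $j$ with $\qfrom{j}\in\neiclr{r}{v}$ is inspected in that loop. The loop breaks (setting $f=\false$) if $\qto{j}\in\neiclr{r}{v}$, so $f=\true$ at return certifies that every such $j$ ends outside $\neiclr{r}{v}$.

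The main obstacle is the agents $j$ whose $\qto{j}$ is set inside $\neiclr{r}{v}$ by a deeper recursive call even though $\qfrom{j}\notin\neiclr{r}{v}$. For these I would argue inductively on recursion depth: any nested \procname\ call that assigns $\qto{j}=w$ first passes line~\ref{pibt:dist} for $w$. At that moment $\qto{i}=v$ is already set, and $w\in\neiclr{r}{v}$ iff $v\in\neiclr{r}{w}$ by symmetry of \dist. So such a $w$ is rejected by line~\ref{pibt:dist} against $i$ itself. The one remaining route into $\neiclr{r}{v}$ is the fallback $\qto{j}\leftarrow\qfrom{j}$ on \invalid. That case requires $\qfrom{j}\in\neiclr{r}{v}$, and those agents are exactly the ones checked in the loop.

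Combining the two cases gives the observation. I expect this symmetry-plus-fallback argument to be the delicate step, and would state it as a small invariant: while $\qto{i}$ is temporarily $v$, no other agent is newly placed in $\neiclr{r}{v}$ except by staying at a $\qfrom{j}\in\neiclr{r}{v}$.
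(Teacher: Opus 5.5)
Your proposal is correct and takes the same approach as the paper. The paper reads this observation directly off the tests at lines~\ref{pibt:dist} and~\ref{pibt:rotation} and the check loop at lines~\ref{pibt:recursive}--\ref{pibt:valid}; your invariant fills in a step the paper leaves implicit. That invariant is: while $\qto{i}=v$ is set, nested calls can place an agent $j$ in $\neiclr{r}{v}$ only through the stay-put fallback at some $\qfrom{j}\in\neiclr{r}{v}$, and the loop inspects every such $j$. It relies on $\qto{i}$ remaining $v$ during the loop, which holds because recursive calls are issued only for agents $j$ with $\qto{j}=\bot$.
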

\begin{obsv}\label{obs:pibtorder}
Consider a call $\Call{\procname}{i,S,\ast}$ for an agent $i\in A$.
For any list $S'$ of agents and an agent $j\in A$, the two statements are equivalent:
\begin{enumerate}
    \item $S'+[j]$ is a prefix of $S$, and 
    \item $\Call{\procname}{j,S',\ast}$ is called before $\Call{\procname}{i,S,\ast}$ and $j>i$.
\end{enumerate}
\end{obsv}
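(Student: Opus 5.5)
The plan is to reduce \Cref{obs:pibtorder} to the standard nesting property of a recursive call stack. The only facts about \Cref{alg:PIBT} it needs are (a) how the list argument $S$ is built, and (b) that each agent is called at most once.

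\emph{Step 1: the list argument records the call stack.} I would show by induction on recursion depth that, for any call $\Call{\procname}{i,S,\ast}$, the list $S$ is exactly the sequence of agents whose calls are currently active (started but not yet returned), from the outermost call inward. The top-level loop only issues $\Call{\procname}{i,\epsilon,g}$, and it does so while no other call is active. The only other call site is line~\ref{pibt:recursive}. There, a call for $j$ is issued with list $S+[i]$ from inside the active call $\Call{\procname}{i,S,\ast}$, so the invariant is preserved.

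\emph{Step 2: each agent is called at most once.} A call for $j$ is issued only when $\qto{j}=\bot$. At the start of $\Call{\procname}{j,\ast,\ast}$, line~\ref{pibt:tmp} sets $\qto{j}$ to a vertex before any further recursion can happen. Otherwise the loop ends and $\qto{j}\leftarrow\qfrom{j}$ is set on return. Hence $\qto{j}\neq\bot$ from the moment the call begins, and no second call for $j$ ever occurs, whether while it is active or after it returns. This step needs the most care. In particular I must check that the swap-back at line~\ref{pibt:loop_e} never resets $\qto{j}$ to $\bot$, and that a failing candidate simply moves on to the next candidate rather than clearing $\qto{j}$. With this, ``the call $\Call{\procname}{j,S',\ast}$'' is well defined, and the ordering of agents by return time (the convention $j>i$) is a total order on calls.

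\emph{Step 3: the two directions.}
\begin{itemize}
\item[(1)$\Rightarrow$(2)] Suppose $S'+[j]$ is a prefix of $S$. By Step~1, at the moment $\Call{\procname}{i,S,\ast}$ starts, the call for $j$ is active and its own list is the part of $S$ preceding $j$, namely $S'$. So $\Call{\procname}{j,S',\ast}$ started earlier. Since the execution is a single sequential recursion, its lifetime interval contains that of $i$'s call. Therefore $j$ returns after $i$, i.e.\ $j>i$.
\item[(2)$\Rightarrow$(1)] Suppose $j$'s call started before $i$'s and returned after it. Call intervals of a sequential recursion are laminar: any two are either nested or disjoint. Here the two intervals overlap, so $i$'s call is nested inside $j$'s. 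Thus $j$ is active when $i$'s call begins. By Step~1, $j$ appears in $S$, and the entries of $S$ before $j$ are exactly $j$'s own list $S'$. So $S'+[j]$ is a prefix of $S$.
\end{itemize}
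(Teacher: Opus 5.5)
Your proposal is correct and is essentially the argument the paper intends. The paper gives no separate proof; it only remarks that the observation follows from the code, in particular the recursive call at line~\ref{pibt:recursive}, which passes $S+[i]$. Your call-stack invariant together with the once-per-agent property is exactly the unpacking of that remark.

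Your Step~2 is also slightly more careful than the paper. The paper only notes that an agent is not called again after its call returns. You additionally rule out a re-entrant call while the agent's call is still active, because $\qto{j}$ is set at line~\ref{pibt:tmp} before any recursion. That is what guarantees $S$ contains no repeated agents.
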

\subsection{Omitted Proofs in \cref{sec:proof}}

\correctconf*
\begin{proof}
Reachability follows immediately since each agent $i$ is assigned either
a vertex in $N[\qfrom{i}]$ or $\qfrom{i}$ itself.
Regarding the distance-$r$ independence, 
let $Q_k$ be a configuration defined as $Q_k = \{\qto{i}\mid 1\le i\le k\}\cup \{\qfrom{i}\mid k+1\le i\le n\}$ for $k\in [0,n]$.
Now we show the following claim ($\spadesuit$): $Q_k$ is distance-$r$ independent for every $k\in [0,n]$.
Note that $Q_k$ represents the hypothetical configuration where every agent $i$ without a determined destination (i.e., $\qto{i}$ is temporarily assigned, or $\qto{i}=\bot$) remains at their current locations $\qfrom{i}$.
We prove the claim by induction on $k$.
The base case $k=0$ is trivial since $Q_0=\mathcal{Q}^{\rm from}$ holds by the initial assumption.

We now show that the claim ($\spadesuit$) holds for $k > 0$, assuming that $Q_{k-1}$ is distance-$r$ independent.
Consider the case when $\Call{\procname}{k,S,g}=\invalid$.
In this case, \Cref{alg:PIBT} updates $\qto{k}=\qfrom{k}$, and thus $Q_{k}=Q_{k-1}$.

We next move to the case $\Call{\procname}{k,S,g}=\valid$: agent $k$ is assigned to a vertex $v\in \neicl{\qfrom{k}}$.
Since $\Call{\procname}{k,S,g}$ returns $\valid$, there is no agent $i\in A\setminus \{k\}$ such that $\dist(\qto{k}, \qto{i})\le r$ or $\qfrom{i} \in \neiclr{r}{v} \land i\in S$ from \Cref{obs:pibtif}.
Here, every index $i\in[k+1,n]$ falls into one of two cases: either $\Call{\procname}{i,\ast,\ast}$ has already been called, or it has not.
Let $D$ be the set of agents that belong to the former.
First, consider an agent $i\in[k+1,n]\setminus D$.
Since the call for $i$ has not occurred, $\qto{i}=\bot$.
Assume for a contradiction that $\qfrom{i}\in \neiclr{r}{v}$, implying the if-statement in line~\ref{pibt:existagent} is true in the for-loop (line~\ref{pibt:checknei}).
Then, line~\ref{pibt:recursive} calls
$\Call{\procname}{i,S+[k],\ast}$, implying $i<k$, a contradiction.
Hence, $\qfrom{i}\notin\neiclr{r}{v}$.

It remains to consider the agents in $D$.
Now we show that $D\subseteq S$.
Consider an agent $i\in D$, and let $S'$ be the second argument of the call $\Call{\procname}{i, \ast, \ast}$.
Since $i\in D$, this call occurs before $\Call{\procname}{k,S,\ast}$, while $i>k$.
By Observation~\ref{obs:pibtorder}, $S'+[i]$ is a prefix of $S$, and hence $i\in S$.
Therefore, $D\subseteq S$, and \Cref{obs:pibtif} implies
$\qfrom{i}\notin\neiclr{r}{v}$ for every $i\in D$.

Therefore, for every $i\in[k+1,n]$, we have
$\qfrom{i}\notin\neiclr{r}{v}$.
Together with \Cref{obs:pibtif} and the induction hypothesis,
this shows that $Q_k$ is distance-$r$ independent.
\end{proof}

For a running time, we show the following.\footnote{Note that the theoretical upper bound of $\alpha$ is $O(n^3)$; however, this is too conservative in practice, as we can see in the experiment.}
\runtime*
\begin{proof}
    Within one step, the function {\procname} is called exactly $n$ times, either recursively or by the top-level procedure.
    We analyze the running time of each call, excluding the time spent in its recursive calls.
    We start by analyzing the time for each candidate $v\in \neicl{\qfrom{i}}$ for each agent $i$.
    It is sufficient that we first compute $\neiclr{r}{v}$, and for every vertex $u \in \neiclr{r}{v}$, store the agents that satisfy either $\qfrom{j}=u$ or $\qto{j}=u$.
    This can be done in at most $O(\Delta_r)$ time for each, since $|\neiclr{r}{v}|\le \Delta_r$.
    Moreover, the swap detection in line~\ref{pibt:swap} needs $O((r+1)\Delta)$ time, since the function $\nextv(u,v)$ returns a vertex in $O(\Delta)$ time among at most $\Delta$ neighbors.
    Thus, each iteration of the for-loop in line~\ref{pibt:loop_candidate} takes $O(\Delta_r+(r+1)\Delta)=O(\Delta_r(r+1))$ time.
    Since $|N[\qfrom{i}]|\leq \Delta+1=O(\Delta)$, the loop considers $O(\Delta)$ candidates.
    Thus, the running time of each call excluding its recursive calls is $O(\Delta_r\Delta(r+1))$.

    Here, we associate the time required for each operation with the agent processed by the call in which that operation is called.
    For a call $\Call{\procname}{i,S,g}$, we charge agent $i$ for the time spent on the operations performed directly in this call, excluding the time spent in its recursive calls.
    Since each operation is performed in at most one such call, the total running time is bounded by the sum of these operation times over all agents.
    Hence, the running time of recursive procedure is bounded by $O(\Delta_r\Delta(r+1))\cdot n = O(\Delta_r\Delta(r+1)n)$.
    Including the deadlock resolution, the total running time is $O(\Delta_r\Delta(r+1)n+\alpha)$.
\end{proof}

\subsection{Omitted Proof of {\cref{sec:lacam}}}
\lacamcomplete*
\begin{proof}
We show that IU-LaCAM generates all configurations reachable
from every configuration $Q$ in finite time.
Since the number of configurations is finite, this implies that
\searchname eventually finds a plan from $S$ to $T$ whenever there is a plan.

We proceed to prove the initial claim.
To this end, we prove the following:
\begin{inparaenum}
    \item[\emph{(i)}] the number of node reinserts (in line \ref{livelock:reset}) is bounded, and
    \item[\emph{(ii)}] the number of possible constraints are bounded.
\end{inparaenum}
The completeness immediately follows since every neighboring configuration can be generated by the constraint that specifies $\qto{i}$ for every agent $i\in A$.

For \emph{(i)}, we show that $\sum_{i\in A}|\mathcal{B}[i]|$ is strictly increasing when a node is reinserted in line \ref{livelock:reset} of \Cref{algo:livelock}.
It is clear that $D$ is not empty; otherwise $\Q=T$, thus a plan from $S$ to $T$ is already obtained.
Therefore, there is an agent $i$ such that $|B[i]|$ increases, and hence $\sum_{i\in A}|B[i]|$ increases.
Combining with the fact that $g=g^{\mathrm{ans}}$, we have $\sum_{i\in A}|B'[i]|>
\sum_{i\in A}|B^{\mathrm{ans}}[i]|$.
The upper bound of $\sum_{i\in A}|\mathcal{B}[i]|$ is $|A|^2$, thus the number of node reinserts is also bounded by $|A|^2$.

Next, for \emph{(ii)}, constraints have information that includes a subset of agents and their next locations.
Here, the number of subsets of agents is bounded by $2^{|A|}$, and the number of candidate next locations is bounded by $\Delta+1$ for each agent.
Thus, the number of constraints is bounded by $2^{|A|}\cdot (\Delta+1)^{|A|}$.

Thus, every neighboring configuration is generated in finite time,
which proves the initial claim and completes the proof.
\end{proof}

\end{document}